\theoremstyle{theorem}
\newtheorem{theorem}{Theorem} 
\providecommand{\algorithmname}{Algorithm}
\theoremstyle{plain}
\theoremstyle{plain}
\def\BibTeX{{\rm B\kern-.05em{\sc i\kern-.025em b}\kern-.08em
    T\kern-.1667em\lower.7ex\hbox{E}\kern-.125emX}}
\providecommand{\propositionname}{Proposition}
\providecommand{\theoremname}{Theorem}
\begin{document}

\title{\huge Sequential In-Network Processing for Cell-Free Massive MIMO with Capacity-Constrained Parallel Radio Stripes}

\author{Sangwon Jo, \textit{Graduate Student Member}, \textit{IEEE}, Hoon Lee, \textit{Member}, \textit{IEEE}, \\and Seok-Hwan Park, \textit{Senior Member}, \textit{IEEE} \thanks{


This work was supported in part by the National Research Foundation (NRF) of
Korea, funded by the MOE under Grant 2019R1A6A1A09031717 and MSIT under Grant RS-2023-00238977, in part by research funds of Jeonbuk National University in 2024, and in part by a Korea Research Institute
for Defense Technology Planning and Advancement
grant funded by the Korea government (Defense Acquisition
Program Administration) (21-106-A00-007, Space-
Layer Intelligent Communication Network Laboratory,
2022).  \textit{(Corresponding authors: Hoon Lee and Seok-Hwan Park.)}

S. Jo and S.-H. Park are with the Division of Electronic Engineering, Jeonbuk
National University, Jeonju, South Korea (email: tkddnjs9803@jbnu.ac.kr, seokhwan@jbnu.ac.kr).

H. Lee is with the Department of Electrical Engineering, Ulsan National Institute of Science and Technology (UNIST), Ulsan 44919, South Korea (email: hoonlee@unist.ac.kr).

Copyright (c) 2024 IEEE. Personal use of this material is permitted. However, permission to use this material for any other purposes must be obtained from the IEEE by sending a request to pubs-permissions@ieee.org.}
}
\maketitle
\begin{abstract}
To ensure coherent signal processing across distributed Access Points (APs) in Cell-Free Massive Multiple-Input Multiple-Output (CF-mMIMO) systems, 
a fronthaul connection between the APs and a Central Processor (CP) is imperative.
We consider a fronthaul network employing parallel radio stripes. In this system, APs are grouped into multiple segments where APs within each segment are sequentially connected through a radio stripe. This fronthaul topology strikes a balance between standard star and bus topologies, which deploy parallel or serial connections of all APs.
Our focus lies in designing the uplink signal processing for a CF-mMIMO system with parallel radio stripes. We tackle the challenge of finite-capacity fronthaul links by addressing the design of In-Network Processing (INP) strategies at APs. These strategies involve linearly combining received signals and compressing the combining output for fronthaul transmission, aiming to maximize the sum-rate performance.
Given the high complexity and the stringent requirement for global Channel State Information (CSI) in jointly optimizing INP strategies across all APs, we propose an efficient sequential design approach. 
Numerical results demonstrate that the proposed sequential INP design achieves a sum-rate gain of up to $82.92$\% compared to baseline schemes.

\end{abstract}

\begin{IEEEkeywords}
Cell-free massive MIMO, radio-stripe fronthaul, in-network processing, fronthaul compression.

\end{IEEEkeywords}

\section{Introduction}
\label{sec:introduction}

\subsection{Background and Motivation} \label{sub:background-motivation}

Cell-Free Massive Multiple-Input Multiple-Output (CF-mMIMO) stands out as a promising wireless networking technology facilitating pervasive connectivity across extensive coverage areas \cite{Bjornson-et-al:TWC20, Ngo:proc24, Zhang:Acc19, Kim:TVT22, Yu:TVT23}. In this architecture, Access Points (APs) distributed across the coverage are coherently managed by a Central Processor (CP) via a fronthaul network. 
In the uplink scenario, the signals transmitted by mobile User Equipments (UEs) are jointly decoded at the CP, leveraging the global observation gathered from the fronthaul links. This coordinated approach results in significantly higher spectral and energy efficiency compared to traditional cellular networks, including conventional mMIMO systems \cite{Molish:CM17, Sharifi:Acc23} which rely on independent mMIMO processing of each Base Station (BS).
Most of the existing literature, including \cite{Zhang:Acc19, Kim:TVT22, Yu:TVT23}, has concentrated on a star fronthaul topology, in which each AP is equipped with a dedicated fronthaul link to the CP. However, this approach necessitates the utilization of expensive cables to establish individual fronthaul connections for each AP.

To mitigate this concern, a bus topology, also known as a radio stripe fronthaul network, has recently been explored \cite{Interdonato:EJWCN19, Shaik-et-al:TC21, Ma-et-al:ICC21,Chiotis-et-al:Meditcom22, Jo:GC23}. By sequentially connecting the APs to the CP with a single cable, the radio stripe fronthaul network facilitates the cost-effective deployment of CF-mMIMO systems. 
Despite the cost-effective fronthaul deployment, the radio stripe introduces additional fronthaul latency compared to the star topology to propagate information signals over the entire APs.
To achieve a balance between the star topology and the radio stripe, one can consider a segmented fronthaul network approach \cite{Fernandes:WCL22}. In this setup, APs are partitioned into multiple segments, with APs within each segment being sequentially connected through a radio stripe or a segment bus.
Consequently, the segmented fronthaul network can be viewed as comprising parallel radio stripes.

To the best of our knowledge, while there have been active research efforts focused on signal processing design in CF-mMIMO systems as introduced in Sec. \ref{sub:related-work}, the development of efficient algorithm for the uplink of CF-mMIMO system with capacity-constrained parallel radio stripes remains unaddressed in the existing literature. 
This gap motivated us to tackle this open problem, as summarized in Sec. \ref{sub:main-contribution}.

\begin{table*}[t]
\caption{\small \centering Comparison of related works.}
\centering
\renewcommand{\arraystretch}{1.5}
\begin{tabular}{|c||c||c|c||c|}
    \hline
    \multirow{2}{*}{Papers} & \multirow{2}{*}{Uplink/downlink} & \multicolumn{2}{c||}{Fronthaul} & \multirow{2}{*}{Performance metric} \\
    \cline{3-4}
    & & Topology & Compression &  \\
    \hline \hline
    \multirow{2}{*}{\cite{Bjornson-et-al:TWC20}} & \multirow{2}{*}{Uplink} & Star topology with an & \multirow{2}{*}{Not considered} & \multirow{2}{*}{Per-UE rate} \\
    & & example for a single radio stripe &  & \\
    \hline
    \cite{Kim:TVT22} & Downlink & Star topology & Considered & Max-min fairness \\
    \hline
    \cite{Yu:TVT23} & Downlink & Star topology & Not considered & Sum-rate \\
    \hline
    \cite{Shaik-et-al:TC21} & Uplink & Single radio stripe & Not considered & Per-UE rate \\
    \hline
    \cite{Ma-et-al:ICC21} & Uplink & Single radio stripe & Not considered & Per-UE rate \\
    \hline
    \cite{Chiotis-et-al:Meditcom22} & Uplink & Single radio stripe & Considered & Per-UE rate \\
    \hline
    \cite{Jo:GC23} & Downlink & Single radio stripe & Considered & Sum-rate \\
    \hline
    \cite{Fernandes:WCL22} & Downlink & Parallel radio stripes & Not considered & Per-UE rate \\
    \hline
    This work & Uplink & Parallel radio stripes & Considered & Sum-rate \\
    \hline
\end{tabular}
~

~

\end{table*}

\subsection{Related Work} \label{sub:related-work}

Early studies on CF-mMIMO systems, such as \cite{Bjornson-et-al:TWC20, Kim:TVT22, Yu:TVT23}, mostly focused on \textit{star topology} where each AP is equipped with individual dedicated fronthaul links, either with explicit capacity constraints  \cite{Kim:TVT22} or without \cite{Bjornson-et-al:TWC20, Yu:TVT23}.
In particular, \cite{Bjornson-et-al:TWC20} explored various uplink reception techniques based on different levels of fronthaul cooperation, while \cite{Kim:TVT22} and \cite{Yu:TVT23} proposed downlink power control algorithms, leveraging alternating optimization \cite{Kim:TVT22} and deep learning-based fronthaul coordination \cite{Yu:TVT23}, respectively.
Notably, CF-mMIMO setups with a star fronthaul topology exhibit significant similarities to the Cloud Radio Access Network (C-RAN) and Fog Radio Access Network (F-RAN) architectures that have been extensively studied (see, e.g., \cite{Wu:Net15, Peng:Net16, Lee:TWC21}).

The CF-mMIMO system employing a \textit{single radio stripe} was examined in \cite{Shaik-et-al:TC21, Ma-et-al:ICC21, Chiotis-et-al:Meditcom22, Jo:GC23}. Due to the serial connection facilitated by a single fronthaul cable, the works in \cite{Shaik-et-al:TC21, Ma-et-al:ICC21, Chiotis-et-al:Meditcom22} focused on successive uplink signal processing across APs. Notably, \cite{Shaik-et-al:TC21} demonstrated that, in the absence of fronthaul capacity constraints, carefully-designed successive local processing with partial information exchange among neighboring APs can achieve performance on par with the optimal centralized reception scheme.
On the downlink side, \cite{Jo:GC23} explored joint optimization of downlink precoding and fronthaul compression. Leveraging the fact that each AP has access to all the compressed bit blocks, which pass though it, due to the serial nature of the data transfer, the application of an advanced Wyner-Ziv (WZ) compression technique \cite{Xiong:SPM04} was proposed.

The spectral efficiency of CF-mMIMO system using \textit{parallel radio stripes} was investigated in \cite{Fernandes:WCL22}, focusing on the downlink scenario.
Fronthaul imperfections were modeled via binary random variables to represent random disconnections for individual fronthaul links. However, \cite{Fernandes:WCL22} primarily analyzed the impact of parallel radio stripe deployment in the downlink setting, without addressing explicit fronthaul capacity limitations. 
As a result, the signal processing design for uplink CF-mMIMO systems with capacity-constrained parallel radio stripes remains an open research problem.

Table 1 provides a summary of the above-mentioned related studies, detailing the scenarios considered (uplink/downlink, fronthaul topology, and assumptions on fronthaul capacity) along with the evaluated performance metrics. This highlights the novelty of our work compared to existing literature.

\subsection{Main Contribution} \label{sub:main-contribution}

In this work, we explore the uplink signal processing of a CF-mMIMO system employing a segmented fronthaul network with parallel radio stripes. To the best of our knowledge, this is the first effort to address the signal processing design of CF-mMIMO systems with parallel radio stripes. Building on prior works that examined practical fronthaul connections in the star \cite{Yu:TVT23} and bus topologies \cite{Jo:GC23}, we assume that digital links with finite capacity are deployed on each radio stripe cable. Within this framework, each AP executes In-Network Processing (INP) \cite{Park:TVT16}, wherein baseband signals received from the uplink wireless channel and the incoming fronthaul link are linearly combined and compressed to generate a bit stream for transmission over the outgoing fronthaul link.
Compared to \cite{Shaik-et-al:TC21}, our system investigates a more practical CF-mMIMO with multiple radio stripes each with finite capacity. This incurs additional optimization on fronthaul quantization strategies together with signal processing over parallel radio stripes.

Our objective is to maximize the sum-rate of the CF-mMIMO network by optimizing the INP strategies, i.e., the linear combining and fronthaul compression strategies. Due to the high complexity and the stringent requirement for global Channel State Information (CSI) associated with jointly optimizing the INP strategies across all APs, we propose an efficient sequential design approach on a per-AP basis. For each AP, we first determine the linear combining matrices adopting the Minimum Mean Squared Error (MMSE) criterion.
It is followed by the optimization of the fronthaul quantization noise powers.
Despite the non-convex nature of the problem, we derive a closed-form solution that ensures a stationary point of the problem.
The proposed INP optimization can be realized with a sequential fronthaul coordination where neighboring APs exchange Side Information (SI) only.
Through numerical evaluations, we validate the performance gains of the proposed sequential design against baseline schemes.

Main contributions of this work are listed below.
\begin{itemize}
    \item \textbf{Novel INP Design:} This work presents the first investigation into the INP design for the uplink of CF-mMIMO systems equipped with parallel radio stripes. To ensure the practical applicability of our approach, we consider both capacity-constrained radio stripes and the effects of imperfect CSI.
    \item \textbf{Efficient Sequential Algorithm:} We propose a computationally efficient sequential INP design algorithm that determines the linear combining and fronthaul compression strategies on a per-AP basis. At each AP, the linear combining matrix is optimized based on the MMSE criterion, while the fronthaul quantization noise powers are adjusted to maximize the achievable sum-rate under parallel decoding using the quantizer outputs.
    \item \textbf{Sequential Implementation:} We outline the sequential implementation of the proposed INP optimization, detailing the minimal SI that needs to be exchanged between neighboring APs.
\end{itemize}

\subsection{Organization} \label{sub:organization}

The remainder of this paper is structured as follows. Sec. \ref{sec:System-Model} introduces the system model for the uplink of a CF-mMIMO system employing parallel radio stripes and describes the imperfect CSI model.
The sequential INP operations, involving linear combining and fronthaul compression, are detailed in Sec. \ref{sec:seq-INP}, followed by a discussion of the proposed INP optimization in Sec. \ref{sec:sequential-optimization} highlighting its sequential implementation.
Numerical results demonstrating the effectiveness of the proposed scheme are presented in Sec. \ref{sec:numerical-results}, and the conclusions are drawn in Sec. \ref{sec:conclusion}.

\section{System Models} \label{sec:System-Model}

\begin{figure}
\centering\includegraphics[width=0.8\linewidth]{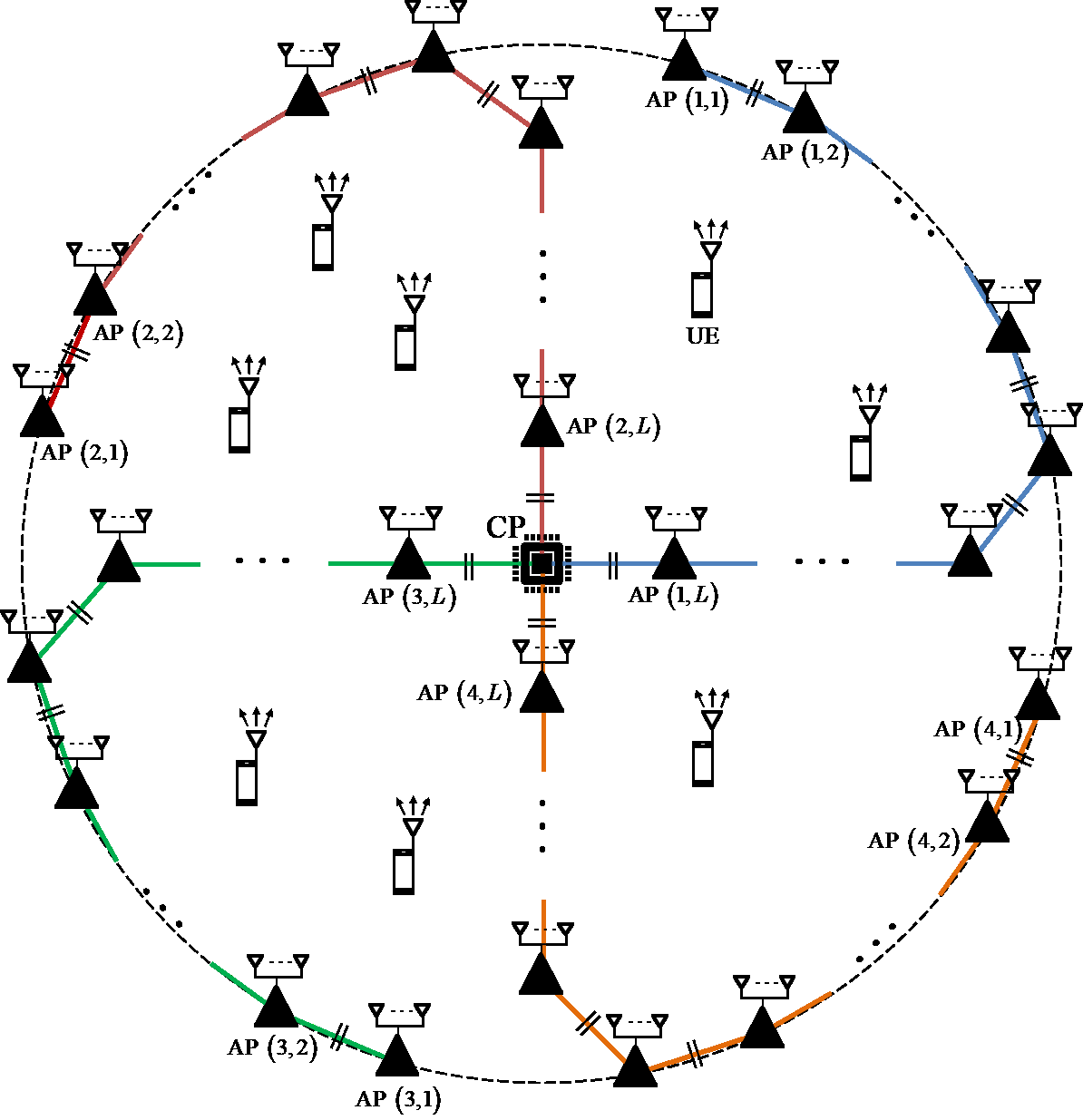}
\caption{\label{fig:system-model}Illustration of the uplink of a CF-mMIMO system with a fronthaul network of $M=4$ radio stripes.}
\end{figure}

We consider the uplink of a CF-mMIMO system employing parallel radio stripes \cite{Fernandes:WCL22}, depicted in Fig. \ref{fig:system-model}. 
There exist $K$ UEs which send their data signals to APs in the uplink. The APs are equipped with $N$ receive antennas, whereas the UEs are assumed to have single antenna\footnote{In this work, we focus on the proposed INP design at the APs, considering single-antenna UEs. The co-design with transmit covariance matrices for multi-antenna UEs is left for future research.}. The APs convey their received signals to a CP through fronthaul links. Then, with the fronthaul-received signals, the CP recovers transmitted signals of UEs by means of the joint signal processing.

\subsection{Fronthaul network model}
In the considered system, the fronthaul network comprises $M$ independent radio stripes each connecting $L$ APs and a CP. 
Index sets of radio stripe, UEs, and APs on each stripe are denoted by $\mathcal{M}=\{1,2,\ldots,M\}$, $\mathcal{K} = \{1,2,\ldots,K\}$, and $\mathcal{L} = \{1,2,\ldots,L\}$, respectively. The $i$th AP on the $m$th stripe is denoted by AP $(m,i)\in\mathcal{M}\times\mathcal{L}$. 
For convenience, the CP is regarded as AP $(m,L+1)$ for any given $m\in\mathcal{M}$.

APs on the $m$th stripe are linked to the CP through a sequential fronthaul connection with capacity of $C_F$ bps/Hz, traversing from AP $(m,1)$ through $(m,2)$, $(m,3)$, $\ldots$, to $(m,L)$, and then to the CP $(m,L+1)$. That is, to collect the data signals received by the APs, the information of APs flows the fronthaul link from AP $(m,i)$ to AP $(m,i+1)$, $\forall i\in\mathcal{L}$, thereby reaching the CP $(m,L+1)$. As a result, the parallel radio stripes can be modeled as a directed graph $\mathcal{G}=(\mathcal{V},\mathcal{E})$ where $\mathcal{V}$ stands for the vertex set collecting the CP and APs and $\mathcal{E}$ denotes the edge set consisting of the fronthaul links. The vertex and edge sets are respectively defined as
\begin{subequations}
\begin{align}
    \mathcal{V}&=\big\{(m,i)\big\}_{m\in\mathcal{M},i\in\mathcal{L}\bigcup\{L+1\}},  \nonumber \\
    \mathcal{E}&=\big\{[(m,i),(m,i+1)]\big\}_{m\in\mathcal{M},i\in\mathcal{L}}, \nonumber
\end{align}
\end{subequations}
where the directed edge $[(m,i),(m,i+1)]$ represents the fronthaul link from AP $(m,i)$ to $(m,i+1)$, $\forall i\in\mathcal{L}$.

\subsection{Channel Training and Estimation Error Model}
Let $\mathbf{h}_{m,i,k}\sim\mathcal{CN}(\mathbf{0}, \mathbf{R}_{m,i,k})\in\mathbb{C}^{N\times 1}$ respresent the channel vector from UE $k$ to AP $(m,i)$ with the spatial covariance matrix $\mathbf{R}_{m,i,k} = \mathbb{E}[\mathbf{h}_{m,i,k}\mathbf{h}_{m,i,k}^H]$.
The corresponding large-scale fading coefficient $\beta_{m,i,k}$, which accounts for the path-loss and shadowing, is given by $\beta_{m,i,k} = \text{tr}(\mathbf{R}_{m,i,k}) / N$.
In the remainder of this subsection, we describe the process of estimating the channel vectors $\{\mathbf{h}_{m,i,k}\}_{k\in\mathcal{K}}$ at AP $(m,i)$ during the channel training phase.

For the channel training, UE $k$ sends the pilot sequence $\boldsymbol{\phi}_{k}\in\mathbb{C}^{K}$ of length $K$ to the APs. It is assumed that $K$ pilot sequences $\{\boldsymbol{\phi}_k\in\mathbb{C}^{K}\}_{k\in\mathcal{K}}$ are orthogonal, i.e., $\boldsymbol{\phi}_k^H \boldsymbol{\phi}_l = K \delta_{k,l}$.
After pilot transmission over $K$ channel uses, the received pilot signal at AP $(m,i)$, denoted by $\mathbf{Y}_{m,i}^p \in \mathbb{C}^{N \times K}$, is given by
\begin{align}
    \mathbf{Y}_{m,i}^p = \sum\nolimits_{k\in\mathcal{K}} \sqrt{P_k} \mathbf{h}_{m,i,k} \boldsymbol{\phi}_k^T + \mathbf{Z}_{m,i}^p, \nonumber
\end{align}
where $P_k$ denotes the pilot transmit power of UE $k$ and $\mathbf{Z}_{m,i}^p$ represents the additive noise signal with elements identically and independently distributed as $\mathcal{CN}(0, \sigma_z^2)$.
To estimate the channel vector $\mathbf{h}_{m,i,k}$, AP $(m,i)$ despreads the received signal using the pilot sequence $\boldsymbol{\phi}_k$ and applies the linear MMSE estimator.
The corresponding estimated CSI vector $\hat{\mathbf{h}}_{m,i,k}\in\mathbb{C}^{N\times 1}$ is expressed as \cite{Shaik-et-al:TC21}
\begin{align}
    \hat{\mathbf{h}}_{m,i,k} = \sqrt{P_k K} \mathbf{R}_{m,i,k} \left( P_k K  \mathbf{R}_{m,i,k} + \sigma_z^2\mathbf{I} \right)^{-1} \mathbf{Y}_{m,i}^p \frac{\boldsymbol{\phi}_k^*}{\sqrt{K}}, \nonumber
\end{align}
which follows the distribution $\mathcal{CN}(\mathbf{0}, \hat{\mathbf{R}}_{m,i,k})$ with the covariance matrix $\hat{\mathbf{R}}_{m,i,k}$ given by \cite{Shaik-et-al:TC21}
\begin{align}
    \hat{\mathbf{R}}_{m,i,k} = P_{k}K\mathbf{R}_{m,i,k}(P_{k}K\mathbf{R}_{m,i,k} + \sigma^{2}_{z}\mathbf{I})^{-1}\mathbf{R}_{m,i,k}. \nonumber
\end{align}
It is well known that with the MMSE estimator, the estimated CSI $\hat{\mathbf{h}}_{m,i,k}$ and the estimation error vector $\tilde{\mathbf{h}}_{m,i,k} = \mathbf{h}_{m,i,k} - \hat{\mathbf{h}}_{m,i,k} \sim \mathcal{CN}(\mathbf{0}, \tilde{\mathbf{R}}_{m,i,k})$ are uncorrelated, such that $\mathbf{R}_{m,i,k} = \hat{\mathbf{R}}_{m,i,k} +  \tilde{\mathbf{R}}_{m,i,k}$.

Through the channel estimation procedure, AP $(m,i)$ can know its local CSI estimate $\hat{\mathbf{H}}_{m,i} = [\hat{\mathbf{h}}_{m,i,1} \, \cdots \, \hat{\mathbf{h}}_{m,i,K}]$ and the associated error covariance matrices $\tilde{\mathbf{R}}_{m,i,k}$, $\forall k\in\mathcal{K}$. On the contrary, the local CSI statistics of other APs $\hat{\mathbf{H}}_{n,j}$ and $\tilde{\mathbf{R}}_{n,j,k}$, $\forall n\neq m, j\neq i$, are, in general, unavailable at AP $(m,i)$. In practice, the fronthaul coordination is necessary to share such locally available information with neighboring APs, which might incur additional fronthaul coordination overhead.

\subsection{Payload Data Transmission on Access Channel}
Let $x_{k}\sim\mathcal{CN}(0,P_{k})$ be the uplink data symbol transmitted at UE $k$ during the payload data transmission phase. We define the transmitted signal vector of all UEs as $\mathbf{x} = [x_1 , \cdots , x_K]^T \sim \mathcal{CN}(\mathbf{0}, \boldsymbol{\Sigma}_{\mathbf{x}})$, where the covariance matrix $\boldsymbol{\Sigma}_{\mathbf{x}}$ is given as  $\boldsymbol{\Sigma}_{\mathbf{x}} = \text{diag}(\{P_k\}_{k\in\mathcal{K}})$.
With the aforementioned CSI model, the received signal vector $\mathbf{y}_{m,i}\in\mathbb{C}^{N\times 1}$ at AP $(m,i)$ can be expressed as
\begin{align}
\mathbf{y}_{m,i} &= \hat{\mathbf{H}}_{m,i}\mathbf{x} + \tilde{\mathbf{H}}_{m,i}\mathbf{x} + \mathbf{z}_{m,i} \nonumber \\
&= \hat{\mathbf{H}}_{m,i}\mathbf{x} + \mathbf{w}_{m,i}, \nonumber
\end{align}
where $\hat{\mathbf{H}}_{m,i} = [\hat{\mathbf{h}}_{m,i,1} \, \cdots \, \hat{\mathbf{h}}_{m,i,K}]$ stands for the stacked estimated CSI from all UEs to AP $(m,i)$, $\tilde{\mathbf{H}}_{m,i} = [\tilde{\mathbf{h}}_{m,i,1} \, \cdots \, \tilde{\mathbf{h}}_{m,i,K}]$ is the corresponding CSI estimation error matrix, and $\mathbf{z}_{m,i}\sim\mathcal{CN}(\mathbf{0}, \sigma_z^2\mathbf{I})$ is the additive Gaussian noise vector. Here, $\mathbf{w}_{m,i} = \tilde{\mathbf{H}}_{m,i}\mathbf{x} + \mathbf{z}_{m,i}$ represents the effective noise which incorporates the erroneous CSI and Gaussian noise, whose covariance matrix $\boldsymbol{\Sigma}_{\mathbf{w}_{m,i}}\in\mathbb{C}^{N\times N}$ is given by
\begin{align}
    \boldsymbol{\Sigma}_{\mathbf{w}_{m,i}} =\mathbb{E}[\mathbf{w}_{m,i}\mathbf{w}_{m,i}^{H}]= \sum\nolimits_{k\in\mathcal{K}} P_k \tilde{\mathbf{R}}_{m,i,k} + \sigma_z^2\mathbf{I}. \label{eq:covariance-Rx-signal}
\end{align}

Due to the mutual interference among the UEs and APs, the centralized signal processing at the CP is an essential requirement to recover the uplink signal vector $\mathbf{x}$ reliably. To this end, the CP needs to know both the estimated CSI $\hat{\mathbf{H}}_{m,i}$ and received signal $\mathbf{y}_{m,i}$, which are only available at AP $(m,i)$. We can exploit the fronthaul network to forward the AP-side information $\hat{\mathbf{H}}_{m,i}$ and $\mathbf{y}_{m,i}$ to the CP. However, unlike the ideal star topology where all APs are directly connected to the CP, the sequential interaction nature of the parallel radio stripes incurs an extensive latency in collecting the AP-side information sequentially. In addition, finite-capacity fronthaul links do not allow the perfect sharing of the information. Moreover, handling all the received signals $\mathbf{y}_{m,i}$, $\forall m\in\mathcal{M},i\in\mathcal{L}$, solely at the CP incurs a huge complexity burden.

To address these issues, in this paper, we propose an INP strategy where the signal processing calculations of the received vectors $\mathbf{y}_{m,i}$, $\forall m\in\mathcal{M},i\in\mathcal{L}$, are executed by individual APs along with the information transmission over the parallel radio stripes. The proposed approach does not require the dedicated CSI exchange among the APs and centralized signal processing at the CP, thereby reducing the overall latency in the fronthaul coordination and signal processing. The major features of the proposed INP method are two-fold: sequential in-network signal processing and sequential optimization. The detailed operations of the proposed approach are presented in the following sections.

\section{Sequential In-Network Signal Processing} \label{sec:seq-INP}

\begin{figure*}[t]
\centering\includegraphics[width=0.7\linewidth]{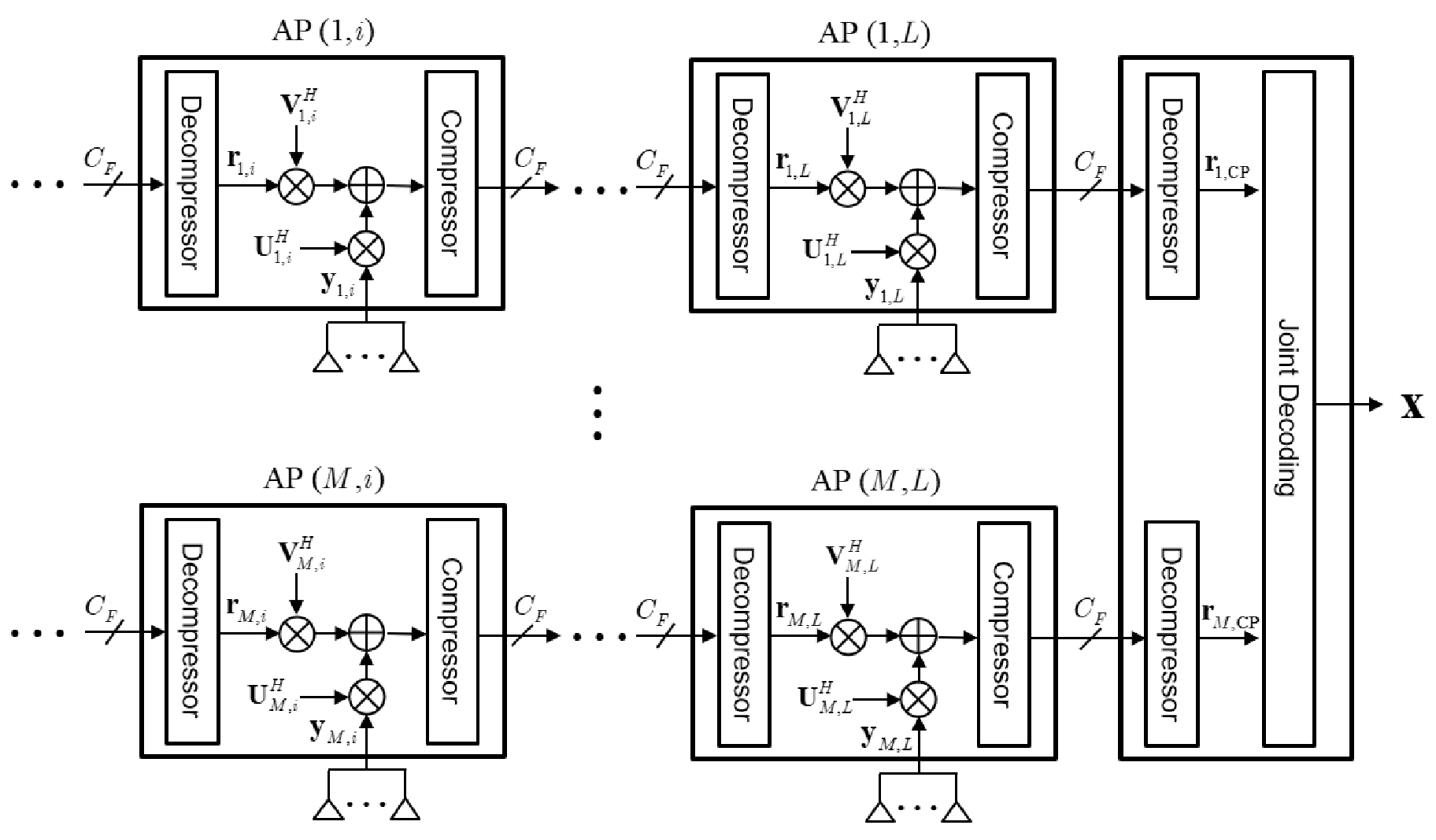}
\caption{\label{fig:INP}Illustration of the INP operations at the APs.}
\end{figure*}

In this section, we first describe the proposed INP method which facilitates the parallel radio stripe fronthaul topology through sequential signal processing among the APs. It is then followed the problem formulation for optimizing key components of the proposed INP.

\subsection{Proposed Sequential INP Strategy}
As illustrated in Fig.~\ref{fig:INP}, the proposed INP operations involve linear combining and fronthaul compression at individual APs. AP $(m,i)$ obtains two different signals $\mathbf{y}_{m,i}$ and $\mathbf{r}_{m,i-1}\in\mathbb{C}^{K\times1}$, which are received from the UEs in the uplink wireless access channels and from preceding AP $(m,i-1)$ in the incoming fronthaul link, respectively. After the INP, AP $(m,i)$ produces the output signal $\mathbf{r}_{m,i}$, which is shared to AP $(m,i+1)$. For convenience, we set $\mathbf{r}_{m,0} = \mathbf{0}$ for AP $(m,1)$, $\forall m\in\mathcal{M}$.
Note that since the CP is responsible for detecting the signal vector $\mathbf{x}$ transmitted by all UEs, we set the dimension of the INP output signal $\mathbf{r}_{m,i}$ to $K$ which corresponds to the number of UEs.
This choice has the advantage that the dimension of $\mathbf{r}_{m,i}$ remains independent of the other system parameters determining the network size, such as $M$, $L$, and $N$.

To generate the signal $\mathbf{r}_{m,i}$ conveyed to subsequent AP $(m,i+1)$, AP $(m,i)$ first performs linear combining of $\mathbf{y}_{m,i}$ and $\mathbf{r}_{m,i-1}$. The resulting output $\tilde{\mathbf{r}}_{m,i}\in\mathbb{C}^{K\times1}$ is obtained as
\begin{align}
\tilde{\mathbf{r}}_{m,i}=\mathbf{U}_{m,i}^H\mathbf{y}_{m,i} + \mathbf{V}_{m,i}^H\mathbf{r}_{m,i-1}, \label{eq:r_tilde}
\end{align} 
where $\mathbf{U}_{m,i}\in\mathbb{C}^{N\times K}$ and $\mathbf{V}_{m,i}\in\mathbb{C}^{K\times K}$ represent the combining matrices for $\mathbf{y}_{m,i}$ and $\mathbf{r}_{m,i-1}$, respectively. Subsequently, AP $(m,i)$ executes fronthaul compression to produce a compressed bit stream representing a quantized version $\mathbf{r}_{m,i}$ of the combining output signal $\tilde{\mathbf{r}}_{m,i}$, which is then sent to AP $(m,i+1)$, $\forall i\in\mathcal{L}$. 
Assuming a Gaussian test channel \cite{Park:TVT16, Zhou:TSP16, Park:TVT22} for the compression, the decompressed signal $\mathbf{r}_{m,i}\in\mathbb{C}^{K\times 1}$ at AP $(m,i+1)$ is expressed as
\begin{align}
\mathbf{r}_{m,i} = \tilde{\mathbf{r}}_{m,i}+\mathbf{q}_{m,i}, \label{eq:quantized-signal-AP-m-i}
\end{align}
where $\mathbf{q}_{m,i}\sim\mathcal{CN}(\mathbf{0},\boldsymbol{\Omega}_{m,i})$ is the quantization noise with the covariance matrix $\boldsymbol{\Omega}_{m,i}\in\mathbb{C}^{K\times K}$. Notice that the linear combining \eqref{eq:r_tilde} and compression \eqref{eq:quantized-signal-AP-m-i} are carried out sequentially from AP $(m,1)$ to AP $(m,L)$, $\forall m\in\mathcal{M}$. Then, the CP recovers the UE signal vector $\mathbf{x}$ based on the fronthaul-received signals $\mathbf{r}_{m,L}$, $\forall m\in\mathcal{M}$.

According to the sequential processing, the signal $\mathbf{r}_{m,i}$ can be rewritten by
\begin{align}
\mathbf{r}_{m,i} = \hat{\mathbf{G}}_{m,i}\mathbf{x} + \mathbf{e}_{m,i}, \label{eq:rewritting-r}
\end{align}
where $\hat{\mathbf{G}}_{m,i}$ indicates the effective channel, and $\mathbf{e}_{m,i}$ is the noise component that collects the additive Gaussian noise, CSI error and quantization noise. These are respectively defined as
\begingroup
\allowdisplaybreaks
\begin{subequations} \label{eq:definition-Ghat-e}
\begin{align}
    &\hat{\mathbf{G}}_{m,i} = \sum\nolimits_{j=1}^i\mathbf{G}_{m,i,j}\hat{\mathbf{H}}_{m,j}, \label{eq:definition-Ghat} \\
    & \mathbf{e}_{m,i} = \sum\nolimits_{j=1}^{i} \mathbf{G}_{m,i,j}\mathbf{w}_{m,j} + \sum\nolimits_{j=1}^{i} \mathbf{T}_{m,i,j}\mathbf{q}_{m,j}, \label{eq:definition-e}
\end{align}    
\end{subequations}
\endgroup
where the matrices $\mathbf{G}_{m,i,j}\in\mathbb{C}^{K\times N}$ and $\mathbf{T}_{m,i,j}\in\mathbb{C}^{K\times K}$ are given by
\begin{subequations}
\begin{align}
    &\mathbf{G}_{m,i,j} = \mathbf{T}_{m,i,j} \mathbf{U}_{m,j}^H, \nonumber\\
    &\mathbf{T}_{m,i,j} = 
    \begin{cases}
        \prod\nolimits_{l=0}^{i-j-1} \!\mathbf{V}_{m,i-l}^H, & j\neq i,\\
        \mathbf{I}, & j=i.
    \end{cases} \nonumber
\end{align}
\end{subequations}
The covariance matrix of the noise component $\mathbf{e}_{m,i}$, denoted by $\boldsymbol{\Sigma}_{\mathbf{e}_{m,i}}$, is given by
\begin{align}
&\boldsymbol{\Sigma}_{\mathbf{e}_{m,i}} = \mathbb{E}[\mathbf{e}_{m,i}\mathbf{e}_{m,i}^H] \label{eq:covariance-e-i} \\
& = \sum\nolimits_{j=1}^i \mathbf{G}_{m,i,j}\boldsymbol{\Sigma}_{\mathbf{w}_{m,j}}\mathbf{G}_{m,i,j}^H + \sum\nolimits_{j=1}^i \mathbf{T}_{m,i,j}\boldsymbol{\Omega}_{m,j}\mathbf{T}_{m,i,j}^H. \nonumber
\end{align}
Based on rate-distortion theory, the signal $\mathbf{r}_{m,i}$ can be reliably recovered by AP $(m,i+1)$ if the mutual information between $\tilde{\mathbf{r}}_{m,i}$ and $\mathbf{r}_{m,i}$ is below the fronthaul capacity $C_{F}$, i.e., $I(\tilde{\mathbf{r}}_{m,i} ; \mathbf{r}_{m,i}) \leq C_F$ \cite{Park:TVT16, Zhou:TSP16, Park:TVT22}. 
This leads to a fronthaul capacity constraint expressed as
\begin{align}
f_{\text{FH},m,i}\left(\mathbf{U}, \mathbf{V}, \boldsymbol{\Omega}\right) &= \log_2\det\left( \hat{\mathbf{G}}_{m,i}\boldsymbol{\Sigma}_{\mathbf{x}}\hat{\mathbf{G}}_{m,i}^H + \boldsymbol{\Sigma}_{\mathbf{e}_{m,i}} \right) \nonumber \\
& - \log_2\det\left(\boldsymbol{\Omega}_{m,i}\right) \leq C_F, \label{eq:fronthaul-constraint-general}
\end{align}
where $\mathbf{U} = \{\mathbf{U}_{m,i}\}_{m\in\mathcal{M},i\in\mathcal{L}}$, $\mathbf{V} = \{\mathbf{V}_{m,i}\}_{m\in\mathcal{M},i\in\mathcal{L}}$, and $\boldsymbol{\Omega} = \{\boldsymbol{\Omega}_{m,i}\}_{m\in\mathcal{M},i\in\mathcal{L}}$.

\subsection{Problem Formulation}
Upon receiving the signals $\mathbf{r}_{m,L}$ from the last APs $(m,L)$ of all radio stripes $\forall m\in\mathcal{M}$, the CP decodes the uplink signal vector of UEs $\mathbf{x}$. The stacked fronthaul-received signal vector $\mathbf{r}_\text{CP} = [\mathbf{r}_{1,L}^H \, \cdots \, \mathbf{r}_{M,L}^H]^H \in \mathbb{C}^{K M \times 1}$ can be written by
\begin{align}
    \mathbf{r}_{\text{CP}} = \hat{\mathbf{G}}_{\text{CP}} \mathbf{x} + \mathbf{e}_{\text{CP}}, \nonumber
\end{align}
where $\hat{\mathbf{G}}_{\text{CP}} = [\hat{\mathbf{G}}_{1,L}^H \, \cdots \, \hat{\mathbf{G}}_{M,L}^H]^H \in \mathbb{C}^{K M \times K}$ collects the effective channel matrices and $\mathbf{e}_{\text{CP}} = [\mathbf{e}_{1,L}^H \, \cdots \, \mathbf{e}_{M,L}^H]^H\in\mathbb{C}^{K M \times 1}$ stands for the stacked noise component whose covariance matrix $\boldsymbol{\Sigma}_{\mathbf{e}_{\text{CP}}}$ is given by
\begin{align}
    \boldsymbol{\Sigma}_{\mathbf{e}_{\text{CP}}}=\mathbb{E}[\mathbf{e}_{\text{CP}}\mathbf{e}_{\text{CP}}^H] = \text{blkdiag}(\{\boldsymbol{\Sigma}_{\mathbf{e}_{m,L}}\}_{m\in\mathcal{M}}). \nonumber
\end{align}
Then, the sum-rate can be computed as \cite{Gamal:Cambridge11}
\begin{align}
    R_{\text{sum}}=\mathbb{E}_{\tilde{\mathbf{h}}}\left[I\left( \mathbf{x} ; \mathbf{r}_{\text{CP}} \right)\right], \nonumber
\end{align}
where the expectation is over the CSI error $\tilde{\mathbf{h}} = \{\tilde{\mathbf{h}}_{m,i,k}\}_{m\in\mathcal{M},i\in\mathcal{L},k\in\mathcal{K}}$.
Applying the Jensen's inequality with Gaussian approximation on the effective noise signal $\mathbf{e}_{\text{CP}}$ \cite{Choi:TWC20}, a lower bound on the sum-rate $R_{\text{sum}}$, denoted by $f_{\text{sum}}\left(\mathbf{U}, \mathbf{V}, \boldsymbol{\Omega}\right)$, is obtained as
\begin{align}
    f_{\text{sum}}\left(\mathbf{U}, \mathbf{V}, \boldsymbol{\Omega}\right) = \log_2\det\left( \mathbf{I} +\boldsymbol{\Sigma}_{\mathbf{e}_{\text{CP}}}^{-1} \hat{\mathbf{G}}_{\text{CP}}\boldsymbol{\Sigma}_{\mathbf{x}}\hat{\mathbf{G}}_{\text{CP}}^H \right). \label{eq:sum-rate-lower-bound}
\end{align}

Our goal is to optimize the INP strategies characterized by the linear combining matrices $\{\mathbf{U}, \mathbf{V}\}$ and the fronthaul quantization noise covariance matrices $\boldsymbol{\Omega}$ to maximize $f_{\text{sum}}(\mathbf{U}, \mathbf{V}, \boldsymbol{\Omega})$ under the fronthaul capacity constraints (\ref{eq:fronthaul-constraint-general}) across all APs. The problem can be formulated as
\begin{subequations} \label{eq:problem-original}
\begin{align}
    \underset{\mathbf{U}, \mathbf{V}, \boldsymbol{\Omega}} {\mathrm{max.}}\,\,\, & f_{\text{sum}}\left(\mathbf{U}, \mathbf{V}, \boldsymbol{\Omega}\right) \, \label{eq:problem-original-cost} \\
 \mathrm{s.t. }\,\,\,\,\,\, & f_{\text{FH},m,i}\left(\mathbf{U}, \mathbf{V}, \boldsymbol{\Omega}\right) \leq C_F, \, m\in\mathcal{M}, i\in\mathcal{L}, \label{eq:problem-original-fronthaul-capacity} \\
 & \boldsymbol{\Omega}_{m,i} \succeq \mathbf{0}, \, m\in\mathcal{M},i\in\mathcal{L}. \label{eq:problem-original-psd}
\end{align}
\end{subequations}
However, jointly optimizing $\{\mathbf{U},\mathbf{V}, \boldsymbol{\Omega}\}$ according to problem (\ref{eq:problem-original}) poses two main challenges. Firstly, the substantial number of optimization variables $ML K (N+2K)$ 
and constraints $ML$ involved leads to high computational complexity in finding a solution. Secondly, addressing (\ref{eq:problem-original}) with widely used iterative algorithms such as weighted MMSE \cite{Christensen:TWC08, Shi:TSP11, Zhao:TSP23-1, Zhao:TSP23-2, Yoo:WCL24} or fractional programming \cite{Shen:TSP18-1, Shen:TSP18-2, Chen:TSP24, Shen:JSAC24} demands the availability of the global CSI estimates $\hat{\mathbf{H}}_{m,i}$, $\forall m\in\mathcal{M},i\in\mathcal{L}$, at the CP for centralization optimization. For this reason, albeit the sequential operations of the proposed INP, its optimization formalism in \eqref{eq:problem-original} generally needs centralized calculations such as the CSI collection and optimization.
To tackle this issue, we propose an efficient successive design approach in the following.

\section{Proposed Sequential INP Optimization} \label{sec:sequential-optimization}

In this section, we present a sequential optimization algorithm for the proposed INP approach which calculates optimized linear combiners and quantization noise covariance matrices sequentially from AP $(m,1)$ to AP $(m,L)$, $\forall m\in\mathcal{M}$. We first discuss the design strategy of the linear combiners $\mathbf{U}_{m,i}$ and $\mathbf{V}_{m,i}$, which is followed by the optimization procedure of the fronthaul quantization $\boldsymbol{\Omega}_{m,i}$.
Specifically, for each stripe $m\in\mathcal{M}$, we follow a sequential optimization order: $\mathbf{U}_{m,1}\rightarrow\boldsymbol{\Omega}_{m,1}\rightarrow\{\mathbf{U}_{m,2},\mathbf{V}_{m,2}\}\rightarrow\boldsymbol{\Omega}_{m,2}\rightarrow\{\mathbf{U}_{m,3},\mathbf{V}_{m,3}\}\rightarrow\boldsymbol{\Omega}_{m,3}\rightarrow\ldots\rightarrow\{\mathbf{U}_{m,L},\mathbf{V}_{m,L}\}\rightarrow\boldsymbol{\Omega}_{m,L}$.

\subsection{Design of Linear combining matrices $\{\mathbf{U}, \mathbf{V}\}$} \label{sub:design-linear-combiners}
For sequential optimization, AP $(m,i)$ identifies its linear combiners $\{\mathbf{U}_{m,i},\mathbf{V}_{m,i}\}$ for fixed INP strategies $\{\mathbf{U}_{m,j},\mathbf{V}_{m,j}, \boldsymbol{\Omega}_{m,j}\}_{j=1}^{i-1}$ computed by preceding APs $(m,j)$, $\forall j=1,\cdots,i-1$.

We first define the vector $\tilde{\mathbf{y}}_{m,i} = [\mathbf{y}_{m,i}^H \, \mathbf{r}_{m,i-1}^H]^H\in\mathbb{C}^{(N+K)\times 1}$
by stacking the linear combining input signals at AP $(m,i)$.
The vector $\tilde{\mathbf{y}}_{m,i}$ can then be rewritten by
\begin{align}
    \tilde{\mathbf{y}}_{m,i} = \mathbf{B}_{m,i} \mathbf{x} + \tilde{\mathbf{w}}_{m,i}, \label{eq:INP-input-stacked}
\end{align}
where $\mathbf{B}_{m,i}\in\mathbb{C}^{(N+K)\times K}$ is defined as
\begin{align}
\mathbf{B}_{m,i} = [\hat{\mathbf{H}}_{m,i}^H , \hat{\mathbf{G}}_{m,i-1}^H]^H, \label{eq:Bmi}
\end{align}
and the covariance matrix $\boldsymbol{\Sigma}_{\tilde{\mathbf{w}}_{m,i}}$ of $\tilde{\mathbf{w}}_{m,i} = [\mathbf{w}_{m,i}^H , \mathbf{e}_{m,i-1}^H]^H\in\mathbb{C}^{(N+K)\times 1}$ is equal to
\begin{align}
    \boldsymbol{\Sigma}_{\tilde{\mathbf{w}}_{m,i}} = \mathbb{E}[\tilde{\mathbf{w}}_{m,i}\tilde{\mathbf{w}}_{m,i}^{H}]=\text{blkdiag}(\boldsymbol{\Sigma}_{\mathbf{w}_{m,i}}, \boldsymbol{\Sigma}_{\mathbf{e}_{m,i-1}}).\label{eq:w_tilde_cov}
\end{align}
The linear combining operation at AP $(m,i)$ in (\ref{eq:r_tilde}) can be interpreted as a linear estimation of $\mathbf{x}$ from $\tilde{\mathbf{y}}_{m,i}$ in (\ref{eq:INP-input-stacked}) by applying a linear filter $\mathbf{A}_{m,i} = [\mathbf{U}_{m,i}^H , \mathbf{V}_{m,i}^H]^H \in \mathbb{C}^{(N+K)\times K}$, yielding
\begin{align}
    \tilde{\mathbf{r}}_{m,i} &= \mathbf{A}_{m,i}^H \tilde{\mathbf{y}}_{m,i} \nonumber \\ 
    & = \mathbf{A}_{m,i}^H\left( \mathbf{B}_{m,i} \mathbf{x} + \tilde{\mathbf{w}}_{m,i} \right). \nonumber
\end{align}
Given the independence between the target signal $\mathbf{x}$ and the noise component $\tilde{\mathbf{w}}_{m,i}$, the linear MMSE estimator can maximize the sum-rate $I(\mathbf{x}; \tilde{\mathbf{r}}_{m,i})$ \cite{Kailath:00}. According to the orthogonality principle \cite{Kay:93}, the linear MMSE estimator $\mathbf{A}_{m,i}$ can be derived as
\begin{align}
    \mathbf{A}_{m,i} &= \left( \mathbb{E}\left[ \tilde{\mathbf{y}}_{m,i}\tilde{\mathbf{y}}_{m,i}^H \right] \right)^{-1} \mathbb{E}\left[ \tilde{\mathbf{y}}_{m,i} \mathbf{x}^H \right]\label{eq:optimal-UV-MMSE} \\
    &= \left(\mathbf{B}_{m,i}\boldsymbol{\Sigma}_{\mathbf{x}}\mathbf{B}_{m,i}^H + \boldsymbol{\Sigma}_{\tilde{\mathbf{w}}_{m,i}}\right)^{-1} \mathbf{B}_{m,i}\boldsymbol{\Sigma}_{\mathbf{x}}. \nonumber
\end{align}
Once $\mathbf{A}_{m,i}$ is computed using (\ref{eq:optimal-UV-MMSE}), the optimal combining matrices $\mathbf{U}_{m,i}$ and $\mathbf{V}_{m,i}$ can be retrieved by taking the first $N$ row vectors and the last $K$ row vectors of $\mathbf{A}_{m,i}$, respectively.

Notice that the local CSI estimate $\hat{\mathbf{H}}_{m,i}$ and the error covariance $\boldsymbol{\Sigma}_{\mathbf{w}_{m,i}}$ in \eqref{eq:covariance-Rx-signal} are available at AP $(m,i)$ through the uplink channel estimation procedure. Therefore, the additional knowledge requested for calculating $\mathbf{A}_{m,i}$ in \eqref{eq:optimal-UV-MMSE} includes the matrices $\hat{\mathbf{G}}_{m,i-1}$ and $\boldsymbol{\Sigma}_{\mathbf{e}_{m,i-1}}$, which can be shared from preceding AP $(m,i-1)$. This sequential AP interaction protocol will be discussed in Sec. \ref{sub:sequential-implementation}.

\subsection{Design of Quantization Noise Covariance Matrix $\boldsymbol{\Omega}$} \label{sub:design-quantization}

Next, we design the fronthaul quantization noise covariance $\boldsymbol{\Omega}_{m,i}$ of AP $(m,i)$, given its linear combiners $\{\mathbf{U}_{m,i}, \mathbf{V}_{m,i}\}$ in \eqref{eq:optimal-UV-MMSE} and the optimized INP strategies of preceding APs $\{\mathbf{U}_{m,j},\mathbf{V}_{m,j}, \boldsymbol{\Omega}_{m,j}\}_{j=1}^{i-1}$.
The combining output signal $\tilde{\mathbf{r}}_{m,i}$ in (\ref{eq:r_tilde}), related to its quantized version ${\mathbf{r}}_{m,i}$ in (\ref{eq:rewritting-r}) by $\tilde{\mathbf{r}}_{m,i} = {\mathbf{r}}_{m,i} - \mathbf{q}_{m,i}$, can be expressed as
\begin{align}
    \tilde{\mathbf{r}}_{m,i} & = \hat{\mathbf{G}}_{m,i} \mathbf{x} + \mathbf{e}_{m,i} - \mathbf{q}_{m,i}, \label{eq:linear-combining-output-rewritten} \\
    &= \hat{\mathbf{G}}_{m,i} \mathbf{x} + \mathbf{n}_{m,i}, \nonumber
\end{align}
where the covariance matrix $\boldsymbol{\Sigma}_{\mathbf{n}_{m,i}}$ of the noise component $\mathbf{n}_{m,i} = \mathbf{e}_{m,i} - \mathbf{q}_{m,i} =\mathbf{U}_{m,i}^H\mathbf{w}_{m,i} + \mathbf{V}_{m,i}^H\mathbf{e}_{m,i-1}$ is given by
\begin{align}
    &\boldsymbol{\Sigma}_{\mathbf{n}_{m,i}} = \mathbf{U}_{m,i}^H\boldsymbol{\Sigma}_{\mathbf{w}_{m,i}}\mathbf{U}_{m,i} + \mathbf{V}_{m,i}^H \boldsymbol{\Sigma}_{\mathbf{e}_{m,i-1}} \mathbf{V}_{m,i} \label{eq:covariance-n-mi} \\
    &= \sum\nolimits_{j=1}^i \mathbf{G}_{m,i,j}\boldsymbol{\Sigma}_{\mathbf{w}_{m,j}}\mathbf{G}_{m,i,j}^H + \sum\nolimits_{j=1}^{i-1} \mathbf{T}_{m,i,j}\boldsymbol{\Omega}_{m,j}\mathbf{T}_{m,i,j}^H. \nonumber
\end{align}
Note that unlike $\boldsymbol{\Sigma}_{\mathbf{e}_{m,i}}$ in (\ref{eq:covariance-e-i}), the summation index $j$ in the second term runs only up to $j=i-1$ rather than $j=i$.

To maximize the sum-rate in \eqref{eq:sum-rate-lower-bound}, the quantization noise covariance matrix
$\boldsymbol{\Omega}_{m,i}$ at AP $(m,i)$ can be determined for maximizing the mutual information $I(\mathbf{x}; \mathbf{r}_{m,i})$, which quantifies the achievable sum-rate when the transmitted data $\mathbf{x}$ is decoded based on the quantized signal $\mathbf{r}_{m,i}$. Combining \eqref{eq:rewritting-r} and \eqref{eq:linear-combining-output-rewritten}, $\mathbf{r}_{m,i}$ can be rewritten as
\begin{align}  
    \mathbf{r}_{m,i} = \hat{\mathbf{G}}_{m,i} \mathbf{x} + \mathbf{n}_{m,i} + \mathbf{q}_{m,i}. \nonumber
\end{align}
This results in the lower bound on the expected sum-rate $\mathbb{E}_{\tilde{\mathbf{h}}}[I(\mathbf{x};\mathbf{r}_{m,i})]$ expressed by
\begin{align}
    f_{m,i}(\boldsymbol{\Omega}_{m,i})=\log_2\det\!\left(\mathbf{I} + \!\left(\boldsymbol{\Sigma}_{\mathbf{n}_{m,i}} \! + \boldsymbol{\Omega}_{m,i}\right)^{\!-1}\hat{\mathbf{G}}_{m,i}\boldsymbol{\Sigma}_{\mathbf{x}}\hat{\mathbf{G}}_{m,i}^H\right). \label{eq:f_mi}
\end{align}
Thanks to the sequential INP processing, AP $(m,i)$ can know the optimized quantization noise covariances $\{\boldsymbol{\Omega}_{m,j}\}_{j=1}^{i-1}$ of the preceding APs $(m,1),\ldots,(m,i-1)$ by means of the fronthaul coordination. As a result, $\boldsymbol{\Sigma}_{\mathbf{n}_{m,i}}$ in \eqref{eq:f_mi} can be regarded as a constant when optimizing $\boldsymbol{\Omega}_{m,i}$ at AP $(m,i)$. Thus, the corresponding sum-rate maximization problem can be formulated as
\begin{align}
    \underset{\boldsymbol{\Omega}_{m,i}} {\mathrm{max.}}\,\,\, & f_{m,i}(\boldsymbol{\Omega}_{m,i})  \, \label{eq:problem-successive-Omega-original} \\
 \mathrm{s.t. }\,\,\,\,\,\, & \log_2\det\left( \mathbf{I} + \boldsymbol{\Omega}_{m,i}^{-1}\left( \hat{\mathbf{G}}_{m,i}\boldsymbol{\Sigma}_{\mathbf{x}}\hat{\mathbf{G}}_{m,i}^H + \boldsymbol{\Sigma}_{\mathbf{n}_{m,i}} \right) \right) \leq C_F, \nonumber \\
 & \boldsymbol{\Omega}_{m,i} \succ \mathbf{0}. \nonumber
\end{align}
We assume that the noise covariance matrix $\boldsymbol{\Sigma}_{\mathbf{n}_{m,i}}$ is invertible, which holds as long as the preceding APs choose full-column rank combining matrices $\{\mathbf{U}_{m,j},\mathbf{V}_{m,j}, \boldsymbol{\Omega}_{m,j}\}_{j=1}^{i-1}$.
By applying the change of variable as $\tilde{\boldsymbol{\Omega}}_{m,i} = \mathbf{N}_{m,i}\boldsymbol{\Omega}_{m,i}\mathbf{N}_{m,i}^H$ with $\mathbf{N}_{m,i} = \boldsymbol{\Sigma}_{\mathbf{n}_{m,i}}^{-1/2}$, the problem (\ref{eq:problem-successive-Omega-original}) can be equivalently transformed into
\begin{align}
    \underset{\tilde{\boldsymbol{\Omega}}_{m,i}} {\mathrm{max.}}\,\,\, & \!\log_2\det\!\left(\!\mathbf{I} + \!\left(\mathbf{I} \! + \tilde{\boldsymbol{\Omega}}_{m,i}\right)^{\!-1}\!\mathbf{N}_{m,i}\hat{\mathbf{G}}_{m,i}\boldsymbol{\Sigma}_{\mathbf{x}}\hat{\mathbf{G}}_{m,i}^H\mathbf{N}_{m,i}^H\right)  \nonumber \\ 
 \mathrm{s.t. }\,\,\,\,\,\, & \log_2\det\left( \mathbf{I} + \tilde{\boldsymbol{\Omega}}_{m,i}^{-1}\left( \mathbf{N}_{m,i}\hat{\mathbf{G}}_{m,i}\boldsymbol{\Sigma}_{\mathbf{x}}\hat{\mathbf{G}}_{m,i}^H \mathbf{N}_{m,i}^H + \mathbf{I} \right) \right) \nonumber \\
 & \,\,\,\,\,\,\,\leq C_F, \label{eq:problem-successive-Omega-modified-1} \\
 & \tilde{\boldsymbol{\Omega}}_{m,i} \succ \mathbf{0}. \nonumber
\end{align}
Once the optimal $\tilde{\boldsymbol{\Omega}}_{m,i}$ is found from (\ref{eq:problem-successive-Omega-modified-1}), the quantization noise covariance matrix $\boldsymbol{\Omega}_{m,i}$ can recovered as
$\boldsymbol{\Omega}_{m,i} \leftarrow \mathbf{N}_{m,i}^{-1}\tilde{\boldsymbol{\Omega}}_{m,i}\mathbf{N}_{m,i}^{-H}$.

To further simplify the problem (\ref{eq:problem-successive-Omega-modified-1}), we consider the eigenvalue decomposition (EVD) \cite{Strang:16} of $\mathbf{N}_{m,i}\hat{\mathbf{G}}_{m,i}\boldsymbol{\Sigma}_{\mathbf{x}}\hat{\mathbf{G}}_{m,i}^H\mathbf{N}_{m,i}^H$ as
\begin{align}   
\mathbf{N}_{m,i}\hat{\mathbf{G}}_{m,i}\boldsymbol{\Sigma}_{\mathbf{x}}\hat{\mathbf{G}}_{m,i}^H\mathbf{N}_{m,i}^H = \mathbf{U}_{m,i}^{\text{eig}} \boldsymbol{\Gamma}_{m,i}^{\text{eig}} (\mathbf{U}_{m,i}^{\text{eig}})^H, \label{eq:EVD-desired-cov}
\end{align}
where $\mathbf{U}_{m,i}^{\text{eig}}$ is the unitary eigenvector matrix satisfying $\mathbf{U}_{m,i}^{\text{eig}} (\mathbf{U}_{m,i}^{\text{eig}})^H = (\mathbf{U}_{m,i}^{\text{eig}})^H \mathbf{U}_{m,i}^{\text{eig}} = \mathbf{I}$, and 
$\boldsymbol{\Gamma}_{m,i}^{\text{eig}} = \text{diag}(\gamma_{m,i,1}^{\text{eig}}, \ldots, \gamma_{m,i,K}^{\text{eig}})$ is the diagonal eigenvalue matrix with $\gamma_{m,i,1}^{\text{eig}}\geq \ldots\geq \gamma_{m,i,K}^{\text{eig}}> 0$.
Then, we can construct an efficient solution to $\tilde{\boldsymbol{\Omega}}_{m,i}$ by using the eigenvector matrix $\mathbf{U}_{m,i}^{\text{eig}}$ as
\begin{align}
    \tilde{\boldsymbol{\Omega}}_{m,i} =  \mathbf{U}_{m,i}^{\text{eig}} \text{diag}\left( \omega_{m,i,1}, \ldots, \omega_{m,i,K}  \right)(\mathbf{U}_{m,i}^{\text{eig}})^H, \label{eq:EVD-Omega-tilde}
\end{align}
where $\omega_{m,i,k} > 0$ stands for the eigenvalue of $\tilde{\boldsymbol{\Omega}}_{m,i}$, which becomes a new optimization variable. Although this assumption does not guarantee optimality, it allows us to diagonalize the matrices inside the determinant functions in (\ref{eq:problem-successive-Omega-modified-1}), thereby leading to closed-form solutions.

By substituting (\ref{eq:EVD-desired-cov}) and (\ref{eq:EVD-Omega-tilde}) into (\ref{eq:problem-successive-Omega-modified-1}), we can obtain the following problem:
\begin{align}
    \underset{\boldsymbol{\omega}_{m,i}} {\mathrm{max.}}\,\,\, & \sum\nolimits_{k\in\mathcal{K}} \log_2\left( 1 + \frac{\gamma_{m,i,k}^{\text{eig}}}{1 + \omega_{m,i,k}} \right)\label{eq:problem-successive-Omega-modified-scalar} \\
 \mathrm{s.t. }\,\,\,\,\,\, & \sum\nolimits_{k\in\mathcal{K}}\log_2\left( 
1 + \frac{1 + \gamma_{m,i,k}^{\text{eig}}}{\omega_{m,i,k}} \right)\leq C_F, \nonumber \\
 & \omega_{m,i,k} > 0,\, \forall k\in\mathcal{K}. \nonumber
\end{align}
where $\boldsymbol{\omega}_{m,i} = \{\omega_{m,i,k}\}_{k\in\mathcal{K}}$. It is straightforward to show that the problem (\ref{eq:problem-successive-Omega-modified-scalar}) can be equivalently reformulated in terms of the variables $a_{m,i,k} = 1/\omega_{m,i,k}$, $k\in\mathcal{K}$, as
\begin{subequations} \label{eq:problem-successive-Omega-modified}
\begin{align} 
    \underset{\mathbf{a}_{m,i}} {\mathrm{max.}}\,\,\, & \sum\nolimits_{k\in\mathcal{K}} \log_2\left( 1 + a_{m,i,k}\left(\gamma_{m,i,k}^{\text{eig}} + 1\right)\right) \\ 
    & - \sum\nolimits_{k\in\mathcal{K}} \log_2\left( 1 + a_{m,i,k}\right) \, \nonumber \\
 \mathrm{s.t. }\,\,\,\,\,\, & \sum\nolimits_{k\in\mathcal{K}} \log_2\left( 1 + a_{m,i,k}\left(\gamma_{m,i,k}^{\text{eig}} + 1\right) \right)  \leq C_F, \label{eq:problem-successive-Omega-modified-fronthaul-capacity} \\
 & a_{m,i,k} \geq 0, \, k\in\mathcal{K}, \label{eq:problem-successive-Omega-modified-nonnegativity}
\end{align}
\end{subequations}
with $\mathbf{a}_{m,i} = \{a_{m,i,k}\}_{k\in\mathcal{K}}$.

The following theorem presents a closed-form solution to the problem (\ref{eq:problem-successive-Omega-modified}).
\begin{theorem}
    The solution $\mathbf{a}_{m,i}$ to the problem (\ref{eq:problem-successive-Omega-modified}) is given as
    \begin{align}
    a_{m,i,k} = \! \left[ \frac{1}{\lambda_{m,i}}\!\left(\!1 \!-\! \frac{1}{\gamma_{m,i,l}^{\text{eig}} + 1}\right) \!- 1 \right]^{+}\!\!\!\!, \label{eq:local-optimum}
    \end{align}
    for $k\in\mathcal{K}$, with $[\cdot]^{+} = \max\{\cdot,0\}$. Here, the optimal Lagrange multiplier $\lambda_{m,i}$ is chosen to satisfy
\begin{align}
    \sum\nolimits_{k\in\mathcal{K}} \log_2\left( 1 + a_{m,i,k}\left(\gamma_{m,i,k}^{\text{eig}} + 1\right) \right) = C_F. \label{eq:condition-optimal-Lagrange-multiplier}
\end{align}
Such $\lambda_{m,i}$ can be found via the bisection method \cite{Boyd:Cambridge}, since the left-hand side of (\ref{eq:condition-optimal-Lagrange-multiplier}) monotonically decreases with $\lambda_{m,i}$.
\end{theorem}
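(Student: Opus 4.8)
The plan is to characterize the solution of problem~\eqref{eq:problem-successive-Omega-modified} through its Karush--Kuhn--Tucker (KKT) system. First I would attach a multiplier $\lambda_{m,i}\ge 0$ to the fronthaul constraint~\eqref{eq:problem-successive-Omega-modified-fronthaul-capacity} and multipliers $\mu_{m,i,k}\ge 0$ to the nonnegativity constraints~\eqref{eq:problem-successive-Omega-modified-nonnegativity}, form the Lagrangian, and differentiate with respect to each $a_{m,i,k}$. Writing $\bar\gamma_{m,i,k}=\gamma_{m,i,k}^{\text{eig}}+1$ for brevity, the stationarity condition is
\begin{align}
\frac{(1-\lambda_{m,i})\,\bar\gamma_{m,i,k}}{1+a_{m,i,k}\bar\gamma_{m,i,k}} - \frac{1}{1+a_{m,i,k}} + \mu_{m,i,k}\ln 2 = 0, \quad k\in\mathcal{K}. \nonumber
\end{align}

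Next I would resolve the two complementary-slackness cases. When $a_{m,i,k}>0$ we have $\mu_{m,i,k}=0$, and the stationarity equation becomes linear in $a_{m,i,k}$; solving it gives precisely $a_{m,i,k}=\frac{1}{\lambda_{m,i}}\big(1-\bar\gamma_{m,i,k}^{-1}\big)-1$, i.e., the unclipped branch of~\eqref{eq:local-optimum}. When this quantity is nonpositive I would instead set $a_{m,i,k}=0$ and read off from the stationarity equation that $\mu_{m,i,k}=\big(1-(1-\lambda_{m,i})\bar\gamma_{m,i,k}\big)/\ln 2$, which is nonnegative exactly when $\lambda_{m,i}\ge 1-\bar\gamma_{m,i,k}^{-1}$ --- the same region on which $\frac{1}{\lambda_{m,i}}\big(1-\bar\gamma_{m,i,k}^{-1}\big)-1\le 0$. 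Hence the two cases glue into the single projected expression~\eqref{eq:local-optimum}, with all primal/dual feasibility and complementary-slackness conditions for the $\mu_{m,i,k}$ satisfied by construction.

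It then remains to pin down $\lambda_{m,i}$. Since $\gamma_{m,i,k}^{\text{eig}}>0$ for every $k$ (so $\bar\gamma_{m,i,k}\ge 1$), the objective of~\eqref{eq:problem-successive-Omega-modified} is strictly increasing in each $a_{m,i,k}$, because its partial derivative $\bar\gamma_{m,i,k}/(1+a_{m,i,k}\bar\gamma_{m,i,k})-1/(1+a_{m,i,k})$ is positive; therefore the fronthaul constraint is active at any optimum and $\lambda_{m,i}>0$, so complementary slackness yields~\eqref{eq:condition-optimal-Lagrange-multiplier}. Substituting~\eqref{eq:local-optimum} into its left-hand side produces a function of $\lambda_{m,i}$ alone: each $a_{m,i,k}(\lambda_{m,i})$ is continuous and nonincreasing in $\lambda_{m,i}$, tends to $+\infty$ as $\lambda_{m,i}\to 0^+$, and vanishes once $\lambda_{m,i}\ge 1-\bar\gamma_{m,i,1}^{-1}$, so the left-hand side is continuous, strictly decreasing on the range where it is positive, and sweeps from $+\infty$ down to $0$; it therefore meets $C_F$ at a unique $\lambda_{m,i}$, which the bisection method locates, as claimed.

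Finally, I would argue the resulting KKT point is genuinely a maximizer. Each summand $\log_2(1+a_{m,i,k}\bar\gamma_{m,i,k})-\log_2(1+a_{m,i,k})$ has nonpositive second derivative in $a_{m,i,k}$ whenever $\gamma_{m,i,k}^{\text{eig}}\ge 0$, so the objective is concave; although the feasible set of~\eqref{eq:problem-successive-Omega-modified} is a sublevel set of a concave function and is thus not convex in $\mathbf{a}_{m,i}$, the bijective change of variables $b_{m,i,k}=\log_2(1+a_{m,i,k}\bar\gamma_{m,i,k})$ recasts the problem as maximizing a separable concave function over the convex set $\{\mathbf{b}_{m,i}\ge\mathbf{0}:\sum_{k\in\mathcal{K}}b_{m,i,k}\le C_F\}$, so stationarity is sufficient for global optimality. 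I expect the main obstacle to be exactly this last step --- certifying optimality in spite of the non-convex appearance of the constraint and checking that the $[\cdot]^{+}$ clipping stays consistent with dual feasibility across all regimes of $\lambda_{m,i}$ --- whereas the stationarity algebra and the monotonicity/bisection argument are routine.
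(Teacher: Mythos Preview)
Your proposal is correct and follows essentially the same Lagrangian/KKT route as the paper: form the Lagrangian with multipliers $\lambda_{m,i}$ and $\mu_{m,i,k}$, solve the stationarity equation in the active case to get the unclipped formula, and use monotonicity of the objective in each $a_{m,i,k}$ to force the fronthaul constraint to bind, yielding \eqref{eq:condition-optimal-Lagrange-multiplier}. Your treatment is in fact more careful than the paper's on two points---you explicitly verify dual feasibility in the clipped branch $a_{m,i,k}=0$, and you supply a convexifying change of variables $b_{m,i,k}=\log_2(1+a_{m,i,k}\bar\gamma_{m,i,k})$ to certify global optimality, whereas the paper simply appeals to Slater's condition and strong duality without addressing the non-convexity of the constraint set.
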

\begin{proof}
    Please refer to Appendix A for the proof.
\end{proof}

Once $\mathbf{a}_{m,i}$ is given from (\ref{eq:local-optimum}), the quantization noise covariance matrix $\boldsymbol{\Omega}_{m,i}$ is obtained as
\begin{align}
    \boldsymbol{\Omega}_{m,i} \leftarrow \mathbf{N}_{m,i}^{-1}\mathbf{U}_{m,i}^{\text{eig}}\text{diag}(\boldsymbol{\omega}_{m,i}) (\mathbf{U}_{m,i}^{\text{eig}})^H \mathbf{N}_{m,i}^{-H}, \label{eq:recover-Omega-from-a}
\end{align}
with $\omega_{m,i,k} = 1 / a_{m,i,k}$, $k\in\mathcal{K}$.

\begin{figure*}[t]
\centering\includegraphics[width=0.7\linewidth]{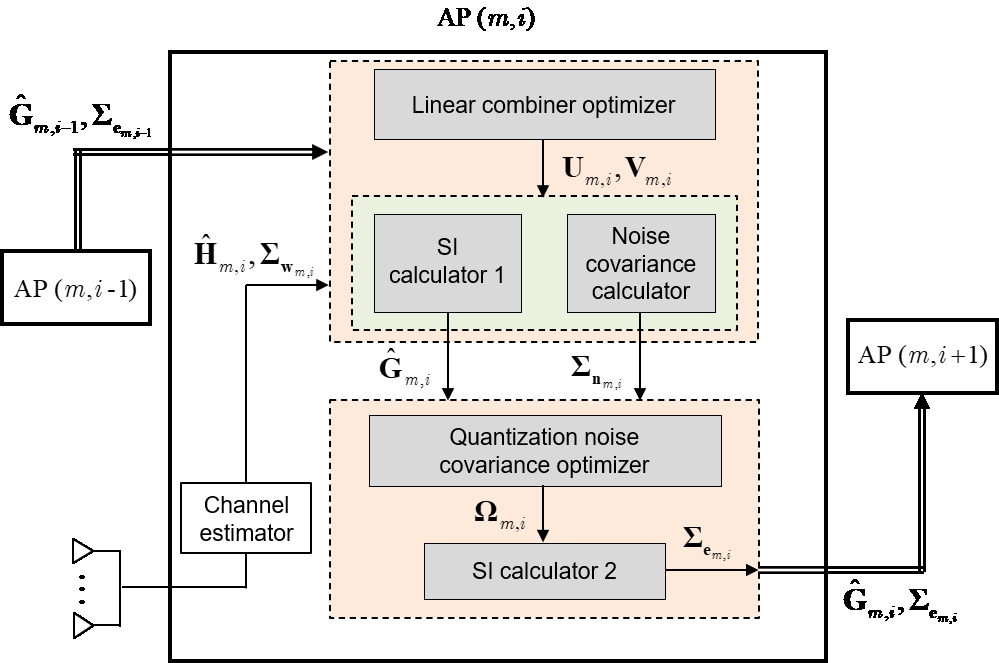}
\caption{\label{fig:opt-SI-perAP}Illustration of procedures for AP $(m,i)$ to optimizing its INP strategy and share the SI with AP $(m,i-1)$ and AP $(m,i+1)$.}
\end{figure*}

\subsection{Sequential Implementation} \label{sub:sequential-implementation}

We provide a sequential implementation of the proposed INP optimization process in Algorithm 1.
The proposed algorithm can be executed in parallel for each of radio stripes $m\in\mathcal{M}$, while APs connected through the identical radio stripe 
sequentially compute (\ref{eq:optimal-UV-MMSE}) and (\ref{eq:local-optimum})
%
from AP $(m,1)$ to AP $(m,L)$, $\forall m\in\mathcal{M}$. AP $(m,1)$, $\forall m\in\mathcal{M}$, begins the sequential optimization procedure with initial $\hat{\mathbf{G}}_{m,0}$ and $\boldsymbol{\Sigma}_{\mathbf{e}_{m,0}}$ being empty matrices. Each AP $(m,i)$ first obtains the combining matrices $\{\mathbf{U}_{m,i}, \mathbf{V}_{m,i}\}$ from \eqref{eq:optimal-UV-MMSE}. To this end, AP $(m,i)$ calculates $\mathbf{B}_{m,i}$ and $\boldsymbol{\Sigma}_{\tilde{\mathbf{w}}_{m,i}}$ from \eqref{eq:Bmi} and \eqref{eq:w_tilde_cov}, respectively. This requires the local CSI $\{\hat{\mathbf{H}}_{m,i}, \boldsymbol{\Sigma}_{\mathbf{w}_{m,i}}\}$ along with the $\{\hat{\mathbf{G}}_{m,i-1}, \boldsymbol{\Sigma}_{\mathbf{e}_{m,i-1}}\}$ which can be shared from AP $(m,i-1)$. Such a sequential AP coordination through radio stripes facilitates efficient fronthaul utilization compared to naive case which exchanges local CSI matrices $\{\hat{\mathbf{H}}_{m,j}, \boldsymbol{\Sigma}_{\mathbf{w}_{m,j}}\}_{j=1}^{i}$ directly.

Once the optimized linear combiners are obtained, AP $(m,i)$ computes $\{\hat{\mathbf{G}}_{m,i}, \boldsymbol{\Sigma}_{\mathbf{n}_{m,i}}\}$ with (\ref{eq:definition-Ghat}) and (\ref{eq:covariance-n-mi}), thereby optimizing the quantization noise covariance $\mathbf{\Omega}_{m,i}$. 
Lastly, AP $(m,i)$ evaluates the effective noise covariance matrix $\boldsymbol{\Sigma}_{\mathbf{e}_{m,i}} = \boldsymbol{\Sigma}_{\mathbf{n}_{m,i}} + \boldsymbol{\Omega}_{m,i}$ and transmits the SI $\{\hat{\mathbf{G}}_{m,i}, \boldsymbol{\Sigma}_{\mathbf{e}_{m,i}}\}$ to the proceeding AP $(m,i+1)$ for the optimization of $\{\mathbf{U}_{m,i+1}, \mathbf{V}_{m,i+1}, \mathbf{\Omega}_{m,i+1}\}$.
This iterative process continues until $L$ sequential optimization steps are completed. The detailed procedures for AP $(m,i)$ to optimize its INP strategy and share the SI with neighboring APs are depicted in Fig.~\ref{fig:opt-SI-perAP}.
\\

\begin{algorithm}
\caption{Proposed sequential INP design algorithm}

\textbf{Input:} $M$, $L$, $K$, $N$, $\boldsymbol{\Sigma}_{\mathbf{x}}$, $\{ \hat{\mathbf{H}}_{m,i}, \boldsymbol{\Sigma}_{\mathbf{w}_{m,i}} \}_{m\in\mathcal{M}, i\in\mathcal{L}}$, $C_F$

\textbf{\footnotesize{}1}~\textbf{ for each radio stripe} $m\in\mathcal{M}$ \textbf{ in parallel}

\textbf{\footnotesize{}2}~~~~\textbf{ for each AP} $i\in\mathcal{L}$ \textbf{ (sequentially)}

\textbf{\footnotesize{}3}~~~~~~~~\textbf{Linear Combiners:}

\textbf{\footnotesize{}4}~~~~~~~~AP $(m,i)$ sets $\mathbf{B}_{m,i} \leftarrow [\hat{\mathbf{H}}_{m,i}^H \, \hat{\mathbf{G}}_{m,i-1}^H]^H$.

\textbf{\footnotesize{}5}~~~~~~~~AP $(m,i)$ sets $\boldsymbol{\Sigma}_{\tilde{\mathbf{w}}_{m,i}} \leftarrow \text{blkdiag}(\boldsymbol{\Sigma}_{\mathbf{w}_{m,i}}, \boldsymbol{\Sigma}_{\mathbf{e}_{m,i-1}})$.

\textbf{\footnotesize{}6}~~~~~~~~AP $(m,i)$ computes $\mathbf{A}_{m,i} = [\mathbf{U}_{m,i}^H \, \mathbf{V}_{m,i}^H]^H$ with (\ref{eq:optimal-UV-MMSE}).

\textbf{\footnotesize{}7}~~~~~~~~AP $(m,i)$ computes $\{\hat{\mathbf{G}}_{m,i}, \boldsymbol{\Sigma}_{\mathbf{n}_{m,i}}\}$ with (\ref{eq:definition-Ghat}) and (\ref{eq:covariance-n-mi}).

\textbf{\footnotesize{}8}~~~~~~~~\textbf{Quantization Noise Covariance Matrix:}

\textbf{\footnotesize{}9}~~~~~~~~AP $(m,i)$ computes $\{\gamma_{m,i,k}^{\text{eig}}\}_{k\in\mathcal{K}}$ with (\ref{eq:EVD-desired-cov}).

\textbf{\footnotesize{}10}~~~~~~AP $(m,i)$ computes $\mathbf{a}_{m,i}$ with (\ref{eq:local-optimum}).

\textbf{\footnotesize{}11}~~~~~~AP $(m,i)$ obtains $\boldsymbol{\Omega}_{m,i}$ with (\ref{eq:recover-Omega-from-a}).


\textbf{\footnotesize{}12}~~~~~~AP $(m,i)$ computes $\boldsymbol{\Sigma}_{\mathbf{e}_{m,i}} = \boldsymbol{\Sigma}_{\mathbf{n}_{m,i}} + \boldsymbol{\Omega}_{m,i}$.

\textbf{\footnotesize{}13}~~~~~~AP $(m,i)$ sends $\{\hat{\mathbf{G}}_{m,i}, \boldsymbol{\Sigma}_{\mathbf{e}_{m,i}}\}$ to AP $(m,i+1)$.

\textbf{Output:} $\{ \mathbf{U}_{m,i}, \mathbf{V}_{m,i}, \boldsymbol{\Omega}_{m,i} \}_{m\in\mathcal{M}, i\in\mathcal{L}}$

\end{algorithm}

\subsection{Hybrid Architecture With Limited RF Chains} \label{sub:hybrid-architecture}

So far, we have considered fully-digital signal processing methods assuming that the number of RF chains at APs is the same as that of antennas. However, in a practical mMIMO setup, equipping the RF chains for all AP antennas would be costly.
To address this, a hybrid analog and digital processing can be used \cite{Lee:TC17,Kim:TC19, Kassam:TC23}. It is assumed that each AP $(m,i)$ is equipped with only $K\ll N$ RF chains. The received signal vector $\mathbf{y}_{m,i}$ at AP $(m,i)$ is processed by a two-stage receive combining. Therefore, the hybrid combining matrix $\mathbf{U}_{m,i}^{\text{hyb}}$ in (\ref{eq:r_tilde}) is designed as
\begin{align}
    \mathbf{U}_{m,i}^{\text{hyb}} = \mathbf{U}_{m,i}^{A} \mathbf{U}_{m,i}^{D}, \label{eq:U_H}
\end{align}
where $\mathbf{U}_{m,i}^{A} \in \mathbb{C}^{N \times K}$ and $\mathbf{U}_{m,i}^{D} \in \mathbb{C}^{K \times K}$ represent the analog and digital combining matrices, respectively.
The analog combining generates a $K$-dimensional signal vector $(\mathbf{U}_{m,i}^{A})^H \mathbf{y}_{m,i}$ by combining
phase-shifted versions of the incoming signal vector $\mathbf{y}_{m,i}$. Therefore, each element of $\mathbf{U}_{m,i}^{A}$ must satisfy a constant-modulus constraint, i.e., $|\mathbf{U}_{m,i}^{A}(n,k)| = 1$, where $\mathbf{X}(n,k)$ denotes the $(n,k)$th element of a matrix $\mathbf{X}$.

We now discuss an extension of our proposed linear combining matrix design in Sec. \ref{sub:design-linear-combiners} to the hybrid analog-digital architecture.
To maintain scalability of the algorithm, we sequentially design the analog and digital combining matrices instead of jointly optimizing them.

\textbf{1) Design of Analog Combining Matrix $\mathbf{U}_{m,i}^{A}$:} The analog combining matrix $\mathbf{U}_{m,i}^{A}$ can be designed by using the fully-digital combining matrix $\mathbf{U}_{m,i}$ obtained from (\ref{eq:optimal-UV-MMSE}). Since the performance of hybrid analog-digital scheme is inherently upper bounded by the fully-digital scheme, we aim to achieve performance as close to the fully-digital solution as possible. To this end, we derive the analog combining matrix $\mathbf{U}_{m,i}^A$ by projecting the fully-digital solution onto the feasible space \cite{Lee:TC17,Kim:TC19}.
Since the constant-modulus constraints are imposed element-wise, the projected analog combining matrix is simply given by \cite{Kim:TC19}
\begin{align}
    \mathbf{U}_{m,i}^A(n,k) = \frac{\mathbf{U}_{m,i}(n,k)}{|\mathbf{U}_{m,i}(n,k)|}, \label{eq:projection-analog-combining}
\end{align}
for all $n\in\mathcal{N} = \{1,2,\ldots,N\}$ and $k\in\mathcal{K}$.

\textbf{2) Design of Digital Combining Matrix $\mathbf{U}_{m,i}^{D}$:} Since the projection process in (\ref{eq:projection-analog-combining}) may cause a deviation of $\mathbf{U}_{m,i}^A$ from the fully-digital solution $\mathbf{U}_{m,i}^{\text{full}}$, we design the digital combining matrix $\mathbf{U}_{m,i}^D$ to compensate for this deviation with the criterion:
\begin{align} 
    \underset{\mathbf{U}_{m,i}^D} {\mathrm{min.}}\,\,\, & \big\| \mathbf{U}_{m,i} - \mathbf{U}_{m,i}^A\mathbf{U}_{m,i}^D \big\|^2_F, \nonumber 
\end{align}
where $\mathbf{U}_{m,i}$ and $\mathbf{U}_{m,i}^A$ are fixed matrices. The optimal solution to this unconstrained quadratic convex problem is given by
\begin{align}
    \mathbf{U}_{m,i}^D = \left( \left(\mathbf{U}_{m,i}^A\right)^H \mathbf{U}_{m,i}^A \right)^{-1} \left(\mathbf{U}_{m,i}^A\right)^H \mathbf{U}_{m,i}. \label{eq:optimal-digital-in-hybrid}
\end{align}
Finally, the hybrid combining matrix $\mathbf{U}_{m,i}^{\text{hyb}}$ can be computed from \eqref{eq:U_H}. Notice that \eqref{eq:projection-analog-combining} and \eqref{eq:optimal-digital-in-hybrid} do not require the information of subsequent APs $(m,i+1),\cdots,(m,L)$. Consequently, the hybrid combining architecture can be readily applied to the proposed INP procedure in Algorithm 1.

\section{Numerical Results} \label{sec:numerical-results}

\subsection{Simulation setup} \label{sub:setup}

We validate the effectiveness of the proposed sequential INP design through numerical results. All simulations are realized with MATLAB. For the simulations, UEs are randomly located within a circular area of radius 200 m. 
We consider the cases with $M\in\{1,2,3,4\}$ radio stripes.
For the case of $M=1$, $L$ APs are evenly spaced around a circle surrounding the coverage area.
For $M\geq 2$, as shown in Fig. \ref{fig:system-model}, APs within each stripe are positioned at uniform intervals along their radius and arc.
The local scattering model in \cite[Sec. 2.6] {Bjornson:FnT17} is adopted for the spatial correlation matrix $\mathbf{R}_{m,i,k}$ with the path-loss $\beta_{m,i,k} = \text{tr}(\mathbf{R}_{m,i,k}))/N$ given as $\beta_{m,i,k}[\text{dB}] = -30.5 - 36.7\log_{10}d_{m,i,k}$, where $d_{m,i,k}$ is the distance between UE $k$ and AP $(m,i)$, accounting for a height difference of 5 m \cite{Shaik-et-al:TC21}.
Each AP is equipped with a uniform linear array with half-wavelength antenna spacing, denoted by $d_{\text{H}}$. The multipath components follow a Gaussian distribution in the angular domain with a standard deviation $\sigma_{\phi} = 15^{\circ}$ centered around a nominal angle, which is denoted by $\phi_{k}$ for each UE $k$.
Thus, the $(a,b)$th element of $\mathbf{R}_{m,i,k}$ are given by 
\begin{align}
    \mathbf{R}_{m,i,k}(a,b)=
    \beta_{m,i,k}\int e^{2\pi jd_{\text{H}}(a-b)\sin(\phi_k + \delta)}\frac{1}{\sqrt{2\pi}\sigma_{\phi}}e^{-\frac{\delta^{2}}{2\sigma^{2}_{\phi}}} d\delta. \nonumber
\end{align}
Unless stated otherwise, the transmit power of each UE is set to $P_k=50$ mW, and the noise power $\sigma^2_{z}$ is -85 dBm with a noise figure of 9 dB, operating over a bandwidth of 100 MHz.

\begin{table}
\caption{\small Description of baseline schemes and cutset upperbound.}
\vspace{-2mm}
\centering
\renewcommand{\arraystretch}{1.5}
\begin{tabular}{|c||c|c|}
    \hline    
     Baseline & INP component & Description \\
      \hline \hline
     \scriptsize{MRC \cite{Ngo:TWC17, Zhang:Acc18}}  & \scriptsize{Linear comb.} & $\mathbf{U}_{m,i} \leftarrow \hat{\mathbf{H}}_{m,i}$, $\mathbf{V}_{m,i} \leftarrow \mathbf{I}$ \\
      \hline
     \multirow{2}{*}{\scriptsize{naiveFH \cite{Liu:Cambridge17, Zhang:arXiv24}}} & \multirow{2}{*}{\scriptsize{Fronth. comp.}} & $\boldsymbol{\Omega}_{m,i} \leftarrow \text{diag}(\{d_{m,i,k}\}_{k\in\mathcal{K}})$ \\     
     & & with $d_{m,i,k}$ in (\ref{eq:naiveFH})  \\
    \hline    
     \scriptsize{Cutset bound \cite{Lim:TIT11}} & \scriptsize{Capacity bound} & $\!\!\!\!\!\begin{array}{c}
          C_{\text{cutset}} = \!\!\!\!
 \min\limits_{\tilde{\mathcal{M}}\subseteq \mathcal{M}} \!\! \bigg\{ \! C_F(M - |\tilde{\mathcal{M}}|) \!\!\!\\
           \,\,\,\,\,+ I\left( \mathbf{x} ; \{\mathbf{y}_{m,i}\}_{m\in\tilde{\mathcal{M}}, i\in\mathcal{L}} \right) 
 \bigg\} \!\!\!
     \end{array}$
     \\
     \hline
\end{tabular}
\end{table}

To see the individual impacts of the proposed MMSE linear combining in Sec. \ref{sub:design-linear-combiners} and the optimized fronthaul (optFH) compression in Sec. \ref{sub:design-quantization}, we consider the following benchmark schemes.
\begin{itemize}
    \item Maximum-Ratio Combining (MRC) \cite{Ngo:TWC17, Zhang:Acc18}: The linear combining matrices are set to $\mathbf{U}_{m,i}= \hat{\mathbf{H}}_{m,i}$ and $\mathbf{V}_{m,i}= \mathbf{I}$.
    \item Naive FrontHaul compression (naiveFH) \cite{Liu:Cambridge17, Zhang:arXiv24}: The elements of $\tilde{\mathbf{r}}_{m,i}$ are individually quantized with an equal fronthaul rate allocation. Consequently, $\boldsymbol{\Omega}_{m,i}$ is set to $\boldsymbol{\Omega}_{m,i} = \text{diag}(\{d_{m,i,k}\}_{k\in\mathcal{K}})$, where each diagonal element $d_{m,i,k}$ is given by
    \begin{align}
        d_{m,i,k} = \frac{1}{2^{C_F/K}-1} \left[ \hat{\mathbf{G}}_{m,i} \boldsymbol{\Sigma}_{\mathbf{x}} \hat{\mathbf{G}}_{m,i}^H + \boldsymbol{\Sigma}_{\mathbf{n}_{m,i}} \right]_{k,k}, \label{eq:naiveFH}
    \end{align}
    with $[\cdot]_{k,k}$ taking the $k$th diagonal element.
\end{itemize}
For reference, we also evaluate the performance gap to a cutset upperbound \cite{Lim:TIT11}.
The cutset bound for the CF-mMIMO system described in Sec. \ref{sec:System-Model} is given as
\begin{align}
    C_{\text{cutset}} = \min_{\tilde{\mathcal{M}}\subseteq \mathcal{M}} \bigg\{  C_F(M - |\tilde{\mathcal{M}}|) + I\left( \mathbf{x} ; \{\mathbf{y}_{m,i}\}_{m\in\tilde{\mathcal{M}}, i\in\mathcal{L}} \right)
 \bigg\} \nonumber
\end{align}
with the mutual information 
$I ( \mathbf{x} ; \{\mathbf{y}_{m,i}\}_{m\in\tilde{\mathcal{M}}, i\in\mathcal{L}} )$ being
\begin{align}
   I\left( \mathbf{x} ; \{\mathbf{y}_{m,i}\}_{m\in\tilde{\mathcal{M}}, i\in\mathcal{L}} \right) = \log_2\det\left( \mathbf{I} + \bar{\boldsymbol{\Sigma}}_{\mathbf{w}_{\tilde{\mathcal{M}}}}^{-1} \bar{\mathbf{H}}_{\tilde{\mathcal{M}}}\boldsymbol{\Sigma}_{\mathbf{x}}\bar{\mathbf{H}}_{\tilde{\mathcal{M}}}^H \right), \nonumber
\end{align}
where $\bar{\mathbf{H}}_{\tilde{\mathcal{M}}} \in \mathbb{C}^{L N \tilde{M} \times K}$ stacks $\bar{\mathbf{H}}_m$, $m\in\tilde{\mathcal{M}}$, vertically with $\bar{\mathbf{H}}_m = [\hat{\mathbf{H}}_{m,1}^H \, \cdots \, \hat{\mathbf{H}}_{m,L}^H]^H \in\mathbb{C}^{LN \times K}$, and $\bar{\boldsymbol{\Sigma}}_{\mathbf{w}_{\tilde{\mathcal{M}}}} = \text{blkdiag}( \{\bar{\boldsymbol{\Sigma}}_{\mathbf{w}_{m}}\}_{m\in\tilde{\mathcal{M}}} )$ with $\bar{\boldsymbol{\Sigma}}_i = \text{blkdiag}(\{\boldsymbol{\Sigma}_{m,i}\}_{i\in\mathcal{L}})$.
The above baseline schemes and the cutset upperbound are summarized in Table 2.

Based on the above benchmark schemes, we can consider three different combinations among the linear combining methods, i.e., MMSE and MRC, and fronthaul compression strategies, i.e., optFH and naiveFH. The proposed INP algorithm can be interpreted as ``MMSE + optFH''. By comparing the proposed approach with ``MRC \cite{Ngo:TWC17, Zhang:Acc18} + optFH'' and ``MMSE + naiveFH \cite{Liu:Cambridge17, Zhang:arXiv24}'' methods, we can examine the impacts of the proposed linear combiners and fronthaul quantization strategies, respectively. Notice that ``MMSE + naiveFH \cite{Liu:Cambridge17, Zhang:arXiv24}'' method can be viewed as a straightforward extension of the conventional scheme in \cite{Shaik-et-al:TC21}, which was originally developed for the single radio stripe case with infinite-capacity fronthaul links, to the considered parallel radio stripes systems with finite-capacity fronthual links.

\subsection{Complexity and Signaling Overhead} \label{sub:complexity-overhead}

In this subsection, we compare the complexity and signaling overhead of the proposed scheme with those of the baseline 
schemes described in Sec. \ref{sub:setup}.

\textbf{1) Complexity comparison:} We first compare the computational complexity of the proposed linear combining scheme in Sec. IV-A with that of the MRC scheme \cite{Ngo:TWC17, Zhang:Acc18}. Since all APs execute the same computations, we focus on the complexity at each AP $(m,i)$.
In the MRC scheme, the linear combining matrices $\mathbf{U}_{m,i}$ and $\mathbf{V}_{m,i}$ are simply set to the estimated local channel matrix and an identity matrix, respectively, resulting in a constant complexity $\mathcal{O}(1)$ which does not scale with any of system parameters.
In contrast, the proposed linear combining scheme in Sec. IV-A computes the linear combining matrices $\mathbf{U}_{m,i}$ and $\mathbf{V}_{m,i}$ as described in (\ref{eq:optimal-UV-MMSE}).
The dominant complexity here arises from the matrix inversion $( \mathbf{B}_{m,i} \boldsymbol{\Sigma}_{\mathbf{x}} \mathbf{B}_{m,i}^H + \boldsymbol{\Sigma}_{\tilde{\mathbf{w}}_{m,i}} )^{-1}$, which has a complexity of $\mathcal{O}( (N+K)^3 )$ using the Gauss-Jordan elimination method.

Next, we compare the complexity of the proposed fronthaul quantization strategy in Sec. IV-B with that of the naiveFH scheme \cite{Liu:Cambridge17, Zhang:arXiv24}.
The complexity of computing the quantization noise covariance matrix $\boldsymbol{\Omega}_{m,i}$ in the proposed scheme is dominated by the EVD in (\ref{eq:EVD-desired-cov}), which has a complexity of $\mathcal{O}(K^3)$ using the divide-and-conquer algorithm. Similarly, the complexity of the naiveFH scheme is dominated by the matrix multiplication $\hat{\mathbf{G}}_{m,i} \boldsymbol{\Sigma}_{\mathbf{x}} \hat{\mathbf{G}}_{m,i}^H$, which also has a complexity of $\mathcal{O}(K^3)$.

In summary, the proposed INP scheme matches the naiveFH scheme \cite{Liu:Cambridge17, Zhang:arXiv24} in terms of asymptotic complexity for computing the fronthaul quantization noise covariance matrix, while its complexity for deriving the linear combining matrices is higher than that of the MRC scheme \cite{Ngo:TWC17, Zhang:Acc18}, enabling significantly improved performance.

\textbf{2) Signaling overhead comparison:} In the implementation of the proposed INP scheme, as depicted in Fig. \ref{fig:opt-SI-perAP}, the SI matrices $\hat{\mathbf{G}}_{m,i} \in \mathbb{C}^{K\times K}$ and $\boldsymbol{\Sigma}_{\mathbf{e}_{m,i}} \in \mathbb{C}^{K\times K}$ need to be transmitted from each AP $(m,i)$ to the subsequent AP $(m,i+1)$.
These SI matrices, along with the local channel matrix $\hat{\mathbf{H}}_{m,i}$, are sufficient for computing both the linear combining matrices and the fronthaul quantization noise covariance matrix.
Consequently, the signaling overhead for each fronthaul link between neighboring APs amounts to $2K^2$ real values.

For the MRC scheme \cite{Ngo:TWC17, Zhang:Acc18}, only the estimated local channel matrix is needed to calculate the linear combining matrices, eliminating the need for additional signaling overhead. However, in the naiveFH scheme \cite{Liu:Cambridge17, Zhang:arXiv24}, the SI matrices 
$\hat{\mathbf{G}}_{m,i}$ and $\boldsymbol{\Sigma}_{\mathbf{e}_{m,i}}$ need to be exchanged over each fronthaul link, similar to the proposed INP scheme, resulting in a signaling overhead of $2K^2$ real values per link.

Table 3 summarizes the discussed computational complexity and signaling overhead of the proposed INP scheme, as well as the baseline MRC and naiveFH schemes.

\begin{table}
\caption{\small A comparison with the baseline schemes in terms of the complexity and signaling overhead.}
\vspace{-2mm}
\centering
\renewcommand{\arraystretch}{1.5}
\begin{tabular}{|c|c||c|c|}
    \hline    
     \multicolumn{2}{|c||}{ } &  Complexity  & Signal. overhead \\
     \hline \hline
     Linear & MRC  &  $\mathcal{O}(1)$ & None \\
    \cline{2-4}
      combining &  Proposed & $\mathcal{O}((N+K)^3)$ & $2K^2$ real values\\
    \hline \hline
    Fronthaul  & naiveFH  & $\mathcal{O}(K^3)$  & $2K^2$ real values \\
    \cline{2-4}
      quant. &  Proposed & $\mathcal{O}(K^3)$ &  $2K^2$ real values \\
    \hline    
\end{tabular}
\end{table}

\subsection{Simulation Results} \label{sub:simulation-results}

\begin{figure}
\centering\includegraphics[width=1.0\linewidth]{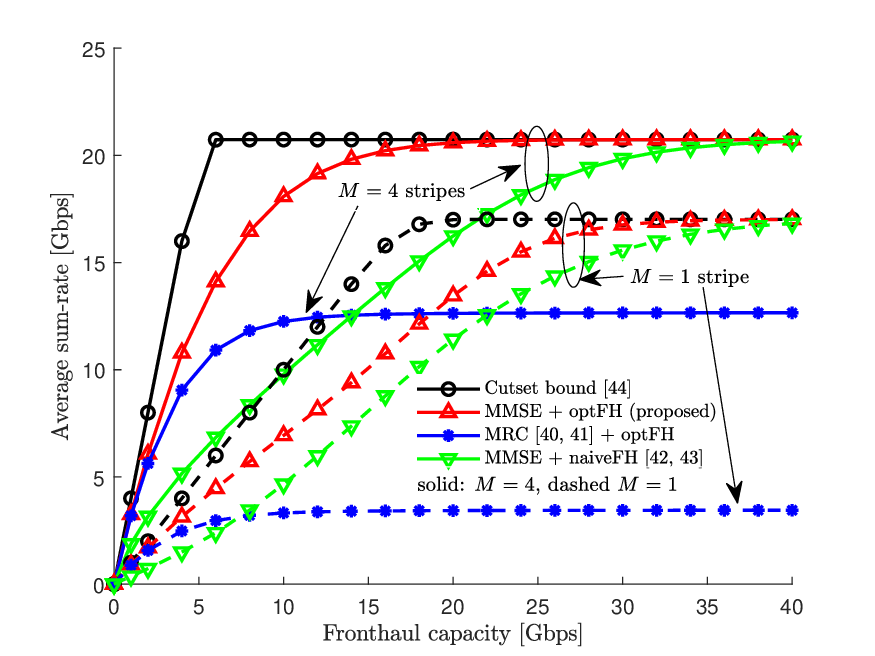}
\caption{\label{fig.avgRsum-vs-CF}Average sum-rate versus the fronthaul capacity $C_F$ with $M \in \{1, 4\}$, $L=32/M$, $N=24$, and $K=20$.}
\end{figure}

Fig. \ref{fig.avgRsum-vs-CF} plots the average sum-rate as a function of the fronthaul capacity $C_F$ for $M \in \{1, 4\}$ radio stripes, $L=32/M$ APs on each stripe, $N=24$ AP antennas, and $K=20$ UEs. 
We can see that the proposed INP approach, which jointly optimizes the linear combining and quantization noise covariance matrices, yields substantial improvements over baseline schemes. Increasing the number of radio stripes enhances overall performance at the expense of increased deployment costs for fronthaul cables. Also, as the fronthaul capacity $C_F$ increases, the sum-rates of all schemes grow due to the reduction in quantization noise levels.
Notably, in the small fronthaul capacity regime, ``MRC \cite{Ngo:TWC17, Zhang:Acc18} + optFH'' method performs better than ``MMSE + naiveFH \cite{Liu:Cambridge17, Zhang:arXiv24}'' method. This indicates that for a small $C_{F}$, the quantization strategy demonstrates a more significant impact than the linear combiners. In contrast, when we have sufficient fronthaul resources, the linear combining procedure plays a pivotal role in maximizing the sum-rate performance. By optimizing the quantization and combining strategies jointly, the proposed INP design outperforms all other benchmark schemes and shows the smallest gap to the cutset upperbound \cite{Lim:TIT11}.

\begin{figure}
\centering\includegraphics[width=1.0\linewidth]{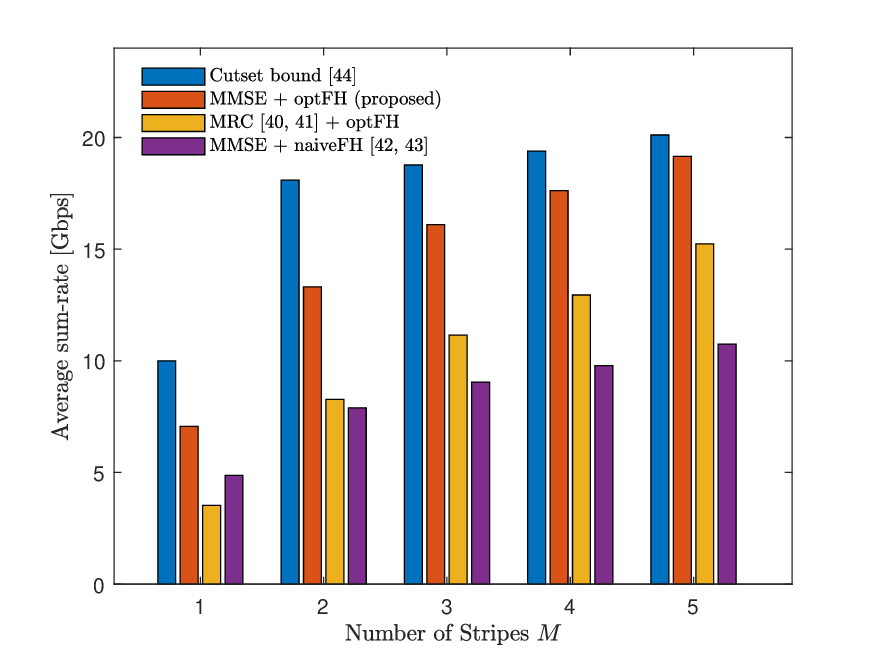}
\caption{\label{fig:avgRsum-vs-stripe}Average sum-rate versus the number of stripes $M$ with $L=24/M$, $N=24$, $C_F=10$, and $K=20$.}
\end{figure}

Fig. \ref{fig:avgRsum-vs-stripe} plots the average sum-rate versus the number of stripes $M$ for $L=24/M$ APs per stripe, $N=24$ AP antennas, $C_F=10$ Gbps, and $K=20$ UEs.
The performance of all schemes exhibits enhancement as the number of stripes increases. This is attributed to a more uniform distribution of APs across the coverage area, consequently leading to increased SNR levels at the receive antennas of APs.
It is noteworthy that the proposed sequential INP design approaches the cutset upperbound \cite{Lim:TIT11} as $M$ grows.
Comparing the proposed INP design and the baseline schemes, the impact of the fronthaul quantization becomes more pronounced with larger $M$, while the impact of optimizing linear combining slightly decreases with $M$.

\begin{figure}
\centering\includegraphics[width=1.0\linewidth]{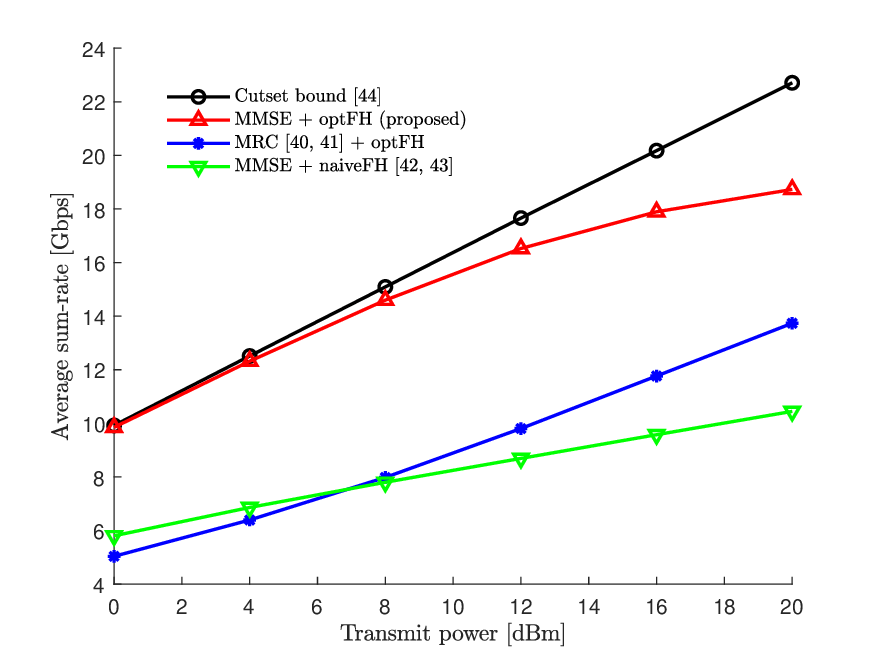}
\caption{\label{fig:avgRsum-vs-power}Average sum-rate versus transmit power $P_{\text{tx}}$ with $M=4$, $L=8$, $N=24$, $C_F=10$, and $K=20$.}
\end{figure}

Fig. \ref{fig:avgRsum-vs-power} shows the average sum rate with respect to the transmit power of UEs for $M=4$ stripes, $L=8$ APs per stripe each with $N=24$ antennas, $C_F=10$ Gbps and $K=20$ UEs. We set $P_k = P_{\text{tx}}$, $\forall k\in\mathcal{K}$, and thus the uplink signal-to-noise ratio (SNR) is defined as $P_{\text{tx}}/\sigma^{2}_{z}$. 
As the SNR increases, the impact of optimizing the fronthaul quantization noise covariance matrices becomes more pronounced, as the management of fronthaul quantization noise signals becomes increasingly crucial compared to additive noise signals.
Additionally, it is noteworthy that the sum-rate performance approaches the cutset upperbound \cite{Lim:TIT11} at low SNR levels only when both the linear combiners and the fronthaul quantization noise covariance are jointly optimized.
Specifically, at $P_{\text{tx}} = 8$ dBm, the proposed INP scheme achieves over $82.92$\% sum-rate gains compared to both baseline schemes.

\begin{figure}
\centering\includegraphics[width=1.0\linewidth]{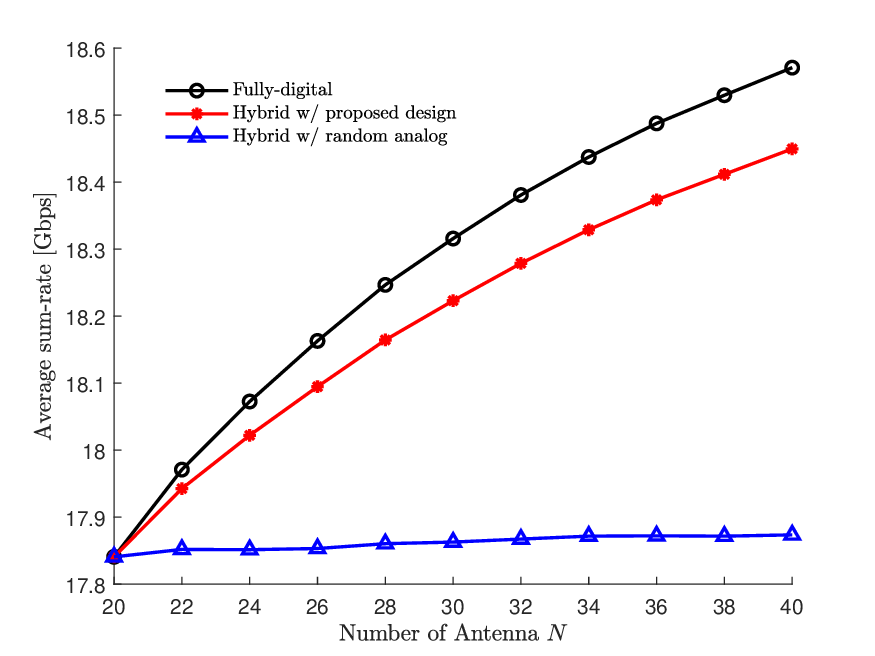}
\caption{\label{fig:avgRsum-vs-antenna}Average sum-rate versus AP antennas $N$ with $M=4$, $L=8$, $C_F=10$, and $K=20$.}
\end{figure}

Lastly, in Fig. \ref{fig:avgRsum-vs-antenna}, we examine the impact of the proposed hybrid analog-digital combining design in Sec. \ref{sub:hybrid-architecture} by plotting the average sum-rate versus the number of antennas $N$ for $M=4$ stripes, $L=8$ APs per stripe, $C_F=10$ Gbps, and $K=20$ UEs. We compare the performance of three schemes: \textit{i)} Proposed fully-digital solution (Fully-digital); \textit{ii)} Proposed hybrid analog-digital solution (Hybrid w/ proposed design); and \textit{iii)} Hybrid analog-digital solution with random analog combining matrices (Hybrid w/ random analog).
In the last scheme, each $(n,k)$th element of analog combining matrix $\mathbf{U}_{m,i}^A$ is given by $e^{j \theta_{m,i,n,k}}$, where the phases $\theta_{m,i,n,k}$ are independent across the indices $m$, $i$, $n$, and $k$, and identically follow a uniform distribution $\mathcal{U}(-\pi, \pi)$.
For the hybrid schemes, the number of RF chains is fixed at $K$, as assumed in Sec. \ref{sub:hybrid-architecture}. 
The figure shows that as the number of AP antennas $N$ increases, the sum-rate of the hybrid scheme with random analog combining matrices remains relatively constant. In contrast, the proposed hybrid analog-digital design leads to a sum-rate that grows with $N$, achieving performance comparable to the fully-digital scheme.
This comparison highlights the effectiveness of the proposed hybrid scheme.

\section{Conclusion} \label{sec:conclusion}
We have investigated the uplink CF-mMIMO system with capacity-constrained parallel radio stripe networks. Specifically, we have focused on INP at APs, where each AP performs linear combining of baseband signals received from the uplink wireless channel and the incoming fronthaul link and a compression of the combining output signals.
To address the complexity and the CSI requirement associated with jointly optimizing the INP strategies across all APs, we have proposed an efficient sequential design on a per-AP basis. Numerical results demonstrated that the proposed design yields significant gains compared to benchmark schemes, with marginal gaps in comparison to the cutset upperbound.
For future work, an extension to a multi-antenna UE case is worth pursuing. In this system, we need to optimize the INP operations and transmit covariance matrices of multi-antenna UEs jointly.
Additionally, we aim to explore the application of more advanced coding techniques, such as joint decompression and decoding (see, e.g., \cite{Lim:TIT11, Park:SPL13}) to further enhance performance and approach the cutset bound.

\section*{Notations}

Uppercase and lowercase boldface symbols represent matrices and vectors, respectively. The complex conjugate, transpose, and Hermitian transpose operators are denoted by $(\cdot)^*$, $(\cdot)^T$, and $(\cdot)^H$, respectively, and the expectation over the random variable $X$ is expressed as $\mathbb{E}_X[\cdot]$. The function $\text{diag}(\cdot)$ creates a diagonal matrix with the elements of the input vector on its main diagonal, while $\text{blkdiag}(\cdot)$ constructs a block diagonal matrix from the input submatrices. $\mathbf{x}\sim\mathcal{CN}(\mathbf{0}, \boldsymbol{\Sigma})$ indicates that the vector $\mathbf{x}$ follows a circularly-symmetric complex Gaussian distribution with zero mean and covariance matrix $\boldsymbol{\Sigma}$. The mutual information between random variables $X$ and $Y$ is denoted by $I(X;Y)$. The notation $\|\cdot\|^2$ represents the Euclidean norm. The set of $m\times n$ complex matrices is denoted by $\mathbb{C}^{m\times n}$, and $\mathbf{X} \succeq \mathbf{0}$ indicates that $\mathbf{X}$ is a positive semidefinite matrix.
We denote the Kronecker delta function by $\delta_{i,j}$, where $\delta_{i,j}=1$ if $i=j$ and $\delta_{i,j}=0$ otherwise.


\appendices

\section{Proof of Theorem 1}

In this appendix, we prove Theorem 1, showing that the optimal solution to problem (\ref{eq:problem-successive-Omega-modified}) is given by (\ref{eq:local-optimum}) with $\lambda_{m,i}$ chosen such that (\ref{eq:condition-optimal-Lagrange-multiplier}) is satisfied.

Note that the Slater's condition is satisfied in (\ref{eq:problem-successive-Omega-modified}), indicating that the strong duality holds. Thus, we can identify an efficient solution to (\ref{eq:problem-successive-Omega-modified}) by leveraging the Lagrange duality method. The Lagrangian of the problem (\ref{eq:problem-successive-Omega-modified}) is given by
\begin{align}
    &\mathcal{L}_{m,i}\left(\mathbf{a}_{m,i}, \lambda_{m,i}, \boldsymbol{\mu}_{m,i}\right) = \nonumber \\
    & \sum\nolimits_{k\in\mathcal{K}} \!\log_2\left( 1 + a_{m,i,k}\left(\gamma_{m,i,k}^{\text{eig}} + 1\right)\right) \nonumber \\
    & - \sum\nolimits_{k\in\mathcal{K}} \log_2\left( 1 + a_{m,i,k}\right) \nonumber \\
    &- \lambda_{m,i}\left[ \sum\nolimits_{k\in\mathcal{K}} \log_2\left( 1 + a_{m,i,k}\left(\gamma_{m,i,k}^{\text{eig}} + 1\right) \right)  - C_F \right] \nonumber \\
    & + \sum\nolimits_{k\in\mathcal{K}} \mu_{m,i,k} a_{m,i,k}, \nonumber
\end{align}
where $\lambda_{m,i}$ and $\boldsymbol{\mu}_{m,i} = \{\mu_{m,i,k}\}_{k\in\mathcal{K}}$ are the Lagrange multipliers associated with the constraints in \eqref{eq:problem-successive-Omega-modified-fronthaul-capacity} and \eqref{eq:problem-successive-Omega-modified-nonnegativity}, respectively.

The Karush-Kuhn-Tucker (KKT) conditions for the problem (\ref{eq:problem-successive-Omega-modified}) can be written as \cite{Boyd:Cambridge}
\begin{subequations}\label{eq:KKT-conditions}
\begin{align} 
    & \frac{\left(1-\lambda_{m,i}\right)\left(\gamma_{m,i,k}^{\text{eig}} + 1\right)}{1 + a_{m,i,k}\left(\gamma_{m,i,k} + 1\right)} - \frac{1}{1 + a_{m,i,k}} \nonumber \\
    & \,\,\,\,\,\,\,\,\,\,+ \mu_{m,i,k}\ln 2 = 0, \label{eq:KKT-1} \\
    & \lambda_{m,i}\left[ \sum\nolimits_{k\in\mathcal{K}} \log_2\left( 1 + a_{m,i,k}\left(\gamma_{m,i,k}^{\text{eig}} + 1\right) \right)  - C_F \right] \nonumber \\
    & \,\,\,\,\,\,\,\,\,\,= 0, \label{eq:KKT-2} \\
    & \mu_{m,i,k} a_{m,i,k} = 0, \, \forall k\in\mathcal{K}, \label{eq:KKT-3}
\end{align}    
\end{subequations}
where (\ref{eq:KKT-1}) represents the stationary condition, and (\ref{eq:KKT-2}) and (\ref{eq:KKT-3}) state the complementary slackness conditions.
For brevity, the primal and dual feasibility conditions are omitted from (\ref{eq:KKT-conditions}).
From (\ref{eq:KKT-1}) and (\ref{eq:KKT-3}), we observe that if $a_{m,i,k} > 0$ for some $k\in\mathcal{K}$, then the equality $\mu_{m,i,k}=0$ should hold, resulting in the expression for $a_{m,i,k}$ as (\ref{eq:local-optimum}).

It is not difficult to see that the objective function of (\ref{eq:problem-successive-Omega-modified}) is non-decreasing with respect to each $a_{m,i,k}$ since its derivative is non-negative for $\gamma_{m,i,k}^{\text{eig}}\geq0$. Since the left-hand-side of \eqref{eq:problem-successive-Omega-modified-fronthaul-capacity} is monotonically increasing with $a_{m,i,k}$, $\forall k\in\mathcal{K}$, at the optimal point, the fronthaul capacity constraint in \eqref{eq:problem-successive-Omega-modified-fronthaul-capacity} always holds with the equality.

\end{document}